\documentclass[lettersize,journal]{IEEEtran}
\usepackage{amsmath,amsfonts}
\usepackage{algorithmic}
\usepackage{algorithm}
\usepackage{array}
\usepackage[caption=false,font=normalsize,labelfont=sf,textfont=sf]{subfig}
\usepackage{textcomp}
\usepackage{stfloats}
\usepackage{url}
\usepackage{verbatim}
\usepackage{graphicx}
\usepackage{cite}
\usepackage{orcidlink}
\usepackage{bm}
\usepackage{amsthm}
\newtheoremstyle{remarkstyle}
{3pt}
{3pt}
{\normalfont}
{\parindent}
{\bfseries\itshape}
{:}
{.5em}
{\bfseries\itshape\thmname{#1}\thmnumber{ #2}}

\theoremstyle{remarkstyle}
\newtheorem{remark}{Remark}
\usepackage{booktabs}

\newtheoremstyle{problemstyle}
{3pt}
{3pt}
{\normalfont}
{\parindent}
{\bfseries\itshape}
{:}
{.5em}
{\bfseries\itshape\thmname{#1}\thmnumber{ #2}}

\theoremstyle{problemstyle}

\usepackage{booktabs}

\usepackage{amsthm}

\newtheoremstyle{mylemma}
{0.5\topsep}
{0.5\topsep}
{\normalfont}
{}
{\itshape}             
{: }
{ }
{}

\theoremstyle{mylemma}
\newtheorem{lemma}{Lemma}

\usepackage{amsmath}
\usepackage{amssymb}
\usepackage{subcaption}

\usepackage{bm}
\newtheoremstyle{observation}
{3pt}
{3pt}
{\itshape}
{0.5em}
{\bfseries\itshape}
{: }
{ }
{}

\theoremstyle{observation}
\newtheorem{observation}{\textbf{\textit{}}}

\usepackage{float}
\usepackage{multirow}
\usepackage{subcaption}

\begin{document}

\title{Cross-channel Specific Emitter Identification and Verification \\ via Signal Envelope}
\author{Yuhao Chen\orcidlink{0009-0000-5050-1688},
	Boxiang He\orcidlink{0000-0002-9235-1144},
	Shilian Wang\orcidlink{0000-0003-4132-8750},
	and Jing Lei\orcidlink{0000-0002-5838-5826}
	\thanks{Y. Chen, B. He, S. Wang, and J. Lei are with the College of Electronic Science and Technology, National University of Defense Technology, Changsha 410003, P. R. China. (email: cyh20220720@163.com; boxianghe1@bjtu.edu.cn;wangsl@nudt.edu.cn;leijing@nudt.edu.cn).}
}
\markboth{}%
{Shell \MakeLowercase{\textit{et al.}}: Robust Physical Layer Authentication for Time-Varying Channels via Feature Decoupling and Diffusion Models}


\maketitle
\begin{abstract}
	Specific emitter identification (SEI) determines which known emitter a received signal originates from, while specific emitter verification (SEV) determines whether the received signal genuinely comes from its claimed emitter. In this paper, we consider the effect of wireless fading channels on SEI and SEV. When the Rician $K$-factor varies, the resulting distribution shift induced by the channel degrades both identification and verification performance. To address this issue, we first theoretically prove that the coefficient of variation of the signal envelope is strictly monotonic with respect to the Rician $K$-factor. Motivated by this property, we propose an envelope-guided adaptive feature modulation (EAFM) identifier for SEI and an EAFM with Mahalanobis distance metric learning (EAFM-MD) verifier for SEV. Specifically, the proposed EAFM identifier adopts a dual-branch neural network to extract device-oriented features from the IQ-domain input and channel-conditioning features from the normalized signal envelope, and adaptively modulates the former via feature-wise linear modulation. Then, we extend the EAFM identifier to an EAFM-MD verifier. The device-fingerprint library is constructed by storing the feature centroid and covariance for each enrolled device, along with the within-device Mahalanobis distances of training signals. For verification, the Mahalanobis distance between the extracted test features and each stored centroid is computed using the stored covariance matrix, and the minimum distance is compared to the corresponding device threshold to make a decision. Finally, numerical results show that the proposed EAFM identifier improves cross-channel identification performance, while the proposed EAFM-MD verifier achieves superior detection performance against unknown spoofing attacks.
\end{abstract}

\begin{IEEEkeywords}
	Feature-wise linear modulation, specific emitter identification, specific emitter verification, spoofing attack.
\end{IEEEkeywords}

\section{Introduction}
\IEEEPARstart{S}{pecific} emitter identification (SEI) determines which known emitter a received signal originates from, while specific emitter verification (SEV) validates whether a received signal genuinely comes from its claimed emitter~\cite{refTalbotSEI}. Both techniques leverage radio-frequency fingerprints (RFFs), which are physical-layer features formed by inherent hardware differences and are characterized by uniqueness, unforgeability, and cost advantages\cite{Dinh2025Radio,Zhang2025Radio,Zhang2023Radio}. Due to their stability and resistance to cloning, RFFs enable key-free, lightweight device identification and defense against spoofing attacks in physical-layer secure communication systems~\cite{Xie2022Multiple}. 

As illustrated in Fig.~\ref{problem_SEIV}, SEV and SEI can be integrated into a unified framework for practical deployment. Within this framework, SEV determines whether a received signal originates from an enrolled legitimate device, while SEI identifies the specific device identity. 
Notably, SEI and SEV have been applied in practical scenarios such as unmanned aerial vehicle (UAV) communication~\cite{Teng2024Exploiting} and industrial internet of things (IIoT)~\cite{Zhou2025Radio}\cite{Shen2022Towards} for device identification and identity verification.
However, a critical challenge in practical SEI and SEV deployment is the fundamental mismatch between the training and testing environments in terms of channel conditions~\cite{refNeedleHaystack}\cite{refElmaghbubIQ}. In real-world deployment scenarios, such as UAVs operating across diverse environments including urban areas, open fields, and mountainous regions, channel conditions can vary dramatically~\cite{Wang2024Robust}. When models trained under specific channel conditions are directly applied to significantly different channel environments, their performance degrades substantially~\cite{Wang2025Avoiding}. This degradation is closely related to specific channel parameters, such as the Rician-$K$ factor~\cite{Li2024Dataset}.

\begin{figure}[t]
	\centering
	\includegraphics[width=0.45\textwidth]{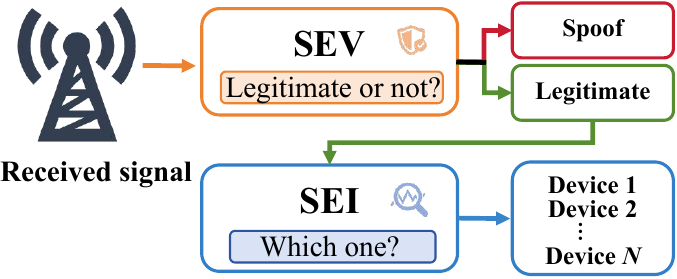}
	\caption{Illustration of SEV for legitimacy verification followed by SEI for device identification.}
	\label{problem_SEIV}
\end{figure}

To address this challenge, various approaches focusing on signal-level and model-level improvements have been proposed~\cite{refC0330}. At the feature and model levels, researchers train RFF extractors using deep metric learning and improve their performance by modeling multipath and Doppler effects via power-delay profiles and by collecting samples across various scenarios, including static line-of-sight (LoS) and randomly generated channel effects~\cite{Shen2022Towards}. However, collecting sufficient training data to capture all possible characteristics of the wireless channel environment is challenging~\cite{refHe}\cite{refNguyenTL}. In~\cite{refXieRF}, a disentangled representation learning approach is proposed that uses adversarial learning to separate signals into device-dependent and device-independent components. The training data are mainly collected under LoS propagation, which eliminates the need for additional datasets.

Additionally, some researchers leverage power amplifier nonlinearity to mitigate channel effects and adopt transfer learning to reduce the need for large amounts of labeled training data while quickly adapting to different channel conditions~\cite{Yang2024RobustnessSecurity}. As confirmed in~\cite{Li2024Dataset}, a low Rician-$K$ factor in UAV communications reduces RFF identification accuracy, and a data augmentation method is proposed to mitigate this issue. Generic out-of-distribution (OOD) generalization methods such as Domain-Adversarial Neural Network (DANN) \cite{Ganin2016Domain} and Variance-Risk Extrapolation (VREx) \cite{Krueger2021OutOfDistribution} have shown promise in other domains, but their application to cross-channel RFF-SEI remains sub-optimal due to the lack of explicit encoding of channel-conditioning information. In summary, SEI and SEV may remain challenging when the underlying channel statistics shift substantially between the training and deployment phases.

In this paper, we propose an envelope-guided adaptive feature modulation (EAFM) identifier for SEI to enhance RFF robustness against channel variations. The EAFM identifier utilizes a dual-branch neural network with EAFM via feature-wise linear modulation (FiLM)~\cite{Perez2018FiLM}. Specifically, the identifier leverages signal-envelope information to generate channel-conditioning features that adaptively modulate the device-oriented features via FiLM, thereby mitigating channel-induced distortions and improving classification performance across varying wireless fading conditions.

Furthermore, in deployment, received signals may originate from unknown spoofing devices rather than legitimate transmitters, which requires SEV to detect such attacks~\cite{refTalbotSEI}\cite{refHiguero}. Unlike SEI, which identifies known emitters, SEV must distinguish unknown transmitters from enrolled legitimate ones. Softmax classifiers are not inherently designed for such a verification task~\cite{refHiguero}, so we further propose an EAFM with Mahalanobis Distance metric learning (EAFM-MD) verifier for SEV. The proposed EAFM-MD verifier adopts a fingerprint library-based approach~\cite{refFingerBLE}\cite{refChilletRiFyFi} using Mahalanobis Distance metric learning, inspired by advances in meta-learning~\cite{Liu2020FewShot}. Specifically, the EAFM-MD verifier represents each enrolled device as a statistical distribution in the feature space and verifies devices by computing Mahalanobis distances to these distributions. Importantly, this verification capability is achieved without requiring any attack samples during training, which allows the system to detect unknown spoofing devices.

To validate our methods, we evaluate performance over Rician fading channels with a wide range of \(K\)-factors, from high-\(K\) LoS scenarios to low-\(K\) NLoS conditions, covering typical and challenging environments for UAV communications. The main contributions of this paper are as follows:

\begin{itemize}
	\item We theoretically prove that the coefficient of variation of the signal envelope is strictly monotonic with respect to the Rician $K$-factor, which provides a theoretical foundation for envelope-guided feature modulation.
	
	\item We propose an EAFM identifier for SEI with cross-channel OOD. The EAFM identifier uses a dual-branch neural network to extract device-oriented features from the IQ-domain input and channel-conditioning features from the normalized signal envelope, where the latter adaptively modulates the former via feature-wise linear modulation. Numerical results show that the proposed EAFM identifier achieves superior identification accuracy under mismatched Rician $K$-factor conditions.
	
	\item Building on the EAFM identifier, we further propose an EAFM-MD verifier for SEV. The EAFM-MD verifier learns device-specific prototypes and precision matrices and constructs a fingerprint library using the Mahalanobis distance, so that the detection of unknown spoofing devices is possible without any attack samples during training. Numerical results show that the proposed EAFM-MD verifier achieves superior detection probability and precision in OOD scenarios with unknown attacks.
\end{itemize}

\emph{Notation:} Throughout this paper, scalars, vectors, and matrices are denoted by lower-case italic letters $x$, bold lower-case italic letters $\bm{x}$, and bold capital italic letters $\bm{X}$, respectively. A random variable and its realization are respectively written as $\mathsf{x}$ and $x$. The operator $[\cdot]^{\mathsf{T}}$ denotes the transpose. The symbols $\odot$ and $\log(\cdot)$ denote the Hadamard product and logarithm, respectively. $\Re\{\cdot\}$ and $\Im\{\cdot\}$ represent the real and imaginary parts of a complex number, respectively. The notation $\mathcal{CN}\left({\mu},\sigma^2\right)$ denotes the probability density function of a random variable following the complex Gaussian distribution with mean $\mu$ and variance $\sigma^2$. $\min(\cdot,\cdot)$ represents the minimum value function. Let $j=\sqrt{-1}$. $\mathbb{E}\{\cdot\}$ denotes the expectation operator with respect to all random variables.

\begin{figure}[t]
	\centering
	\includegraphics[width=0.475\textwidth]{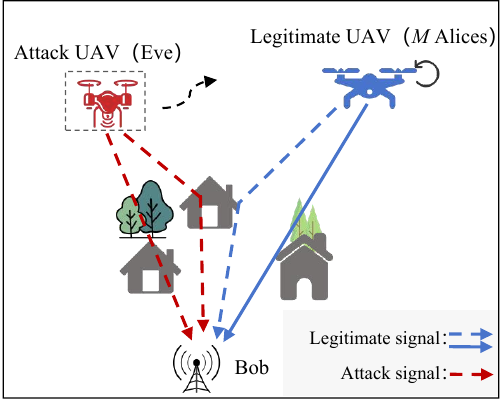}
	\caption{System scenario of UAV communications in environment-dependent Rician fading channels with potential spoofing attacks. Bob receives signals from $M$ legitimate UAVs, while an unknown attack UAV may exist and attempt to spoof legitimate identities.}
	\label{Proposed_imSMEI_nomp}
\end{figure}

\section{System Model and Problem Formulation}
\label{chapter:system_model_and_problem}
In this section, we first describe the SEI system and the communication scenario. Then, we formulate the SEI problem under two distinct settings.
\subsection{System model}
\label{system_model_sub}

We consider an SEI system in a UAV communication as illustrated in Fig.~\ref{Proposed_imSMEI_nomp}, where $M$ legitimate transmitters indexed by $\mathcal{I}_M = \{1, 2, \ldots, M\}$ communicate with a receiver (Bob). However, an unknown attack transmitter (Eve) may exist and attempt to spoof legitimate identities by impersonating a known device. Denote the active legitimate transmitter as Alice~$m$, $m \in \mathcal{I}_M$. The baseband equivalent received signal from Alice~$m$ can be represented as
\begin{align}
	y_m(t) &= h_m \, d_m\!\left(x_m(t)\right) + w(t), \label{eq:received_signal}\\
	h_m &= \sqrt{\frac{K}{K+1}} \, e^{j\theta_m} \;+\; \sqrt{\frac{1}{K+1}} \, \tilde{h}_m, \label{eq:rician_channel}
\end{align}
where $x_m(t)$ is the original transmitted signal of Alice~$m$; $h_m$ is the channel coefficient from Alice~$m$ to Bob, which comprises a dominant line-of-sight (LoS) component and scattered multipath components; $K$ is the Rician $K$-factor representing the power ratio between the LoS and scattered components; $\theta_m$ is the phase of the dominant component; $\tilde{h}_m$ captures the scattered multipath fading; $d_m(\cdot)$ is the RF distortion function of Alice~$m$; $w(t) \sim \mathcal{CN}(0,\sigma^2)$ is the additive white Gaussian noise.

Following the established modeling framework in~\cite{He2020CooperativeSEI}, the device-specific distortion function $d_m\left(\cdot\right)$ is characterized by a cascade of hardware impairments inherent to each transmitter's RF front-end. The distortion process begins at the I/Q modulator, where phase and gain mismatches between the in-phase and quadrature branches introduce characteristic fingerprints. The signal after I/Q imbalance is expressed as
\begin{equation}
	\label{eq:iq_imbalance}
	x_m'\left(t\right) = \mu_m x_m\left(t\right) + \nu_m x_m^*\left(t\right),
\end{equation}

\noindent where $x_m^*\left(t\right)$ denotes the complex conjugate, and $\mu_m$ and $\nu_m$ are device-specific modulator distortion parameters defined as
\begin{align}
	\mu_m &= \tfrac{1}{2}\left(G_m+1\right)\cos\!\left(\tfrac{\zeta_m}{2}\right) + j\,\tfrac{1}{2}\left(G_m-1\right)\sin\!\left(\tfrac{\zeta_m}{2}\right), \label{eq:mu_m} \\
	\nu_m &= \tfrac{1}{2}\left(G_m-1\right)\cos\!\left(\tfrac{\zeta_m}{2}\right) + j\,\tfrac{1}{2}\left(G_m+1\right)\sin\!\left(\tfrac{\zeta_m}{2}\right), \label{eq:nu_m}
\end{align}

\noindent where $G_m$ denotes the gain imbalance and $\zeta_m$ is the phase bias. Following the I/Q imbalance, the signal undergoes frequency up-conversion, during which carrier leakage and spurious tones are introduced. The distorted signal can then be given as
\begin{equation}
	\label{eq:carrier_spurious}
	x_m''\left(t\right) = \left(x_m'\left(t\right) + \xi_m\right) e^{j2\pi f_c t} + a_m^{\mathrm{ST}} e^{j2\pi\left(f_c + f_m^{\mathrm{ST}}\right)t},
\end{equation}

\noindent where $f_c$ is the carrier frequency; $\xi_m$ denotes the carrier leakage coefficient; $a_m^{\mathrm{ST}}$ and $f_m^{\mathrm{ST}}$ are the amplitude and frequency offset of the spurious tone, respectively.

Finally, the signal passes through the power amplifier (PA), whose nonlinear distortion is modeled by a Taylor series expansion, i.e.
\begin{align}
	x_m'''\left(t\right) &= \sum_{l=1}^{L} b_{m,l} \left(x_m''\left(t\right)\right)^l, \label{eq:pa_nonlinearity} \\
	&= d_m\!\left(x_m\left(t\right)\right), \label{eq:total_distortion}
\end{align}

\noindent where $L$ is the Taylor polynomial order and $b_{m,l}$ are the device-specific PA coefficients. The complete distortion function $d_m\left(\cdot\right)$ in \eqref{eq:total_distortion} encapsulates all four hardware impairments with device-specific parameters $\left\{G_m, \zeta_m, \xi_m, a_m^{\mathrm{ST}}, f_m^{\mathrm{ST}}, b_{m,l}\right\}$.

\begin{figure}[t]
	\centering
	\includegraphics[width=0.45\textwidth]{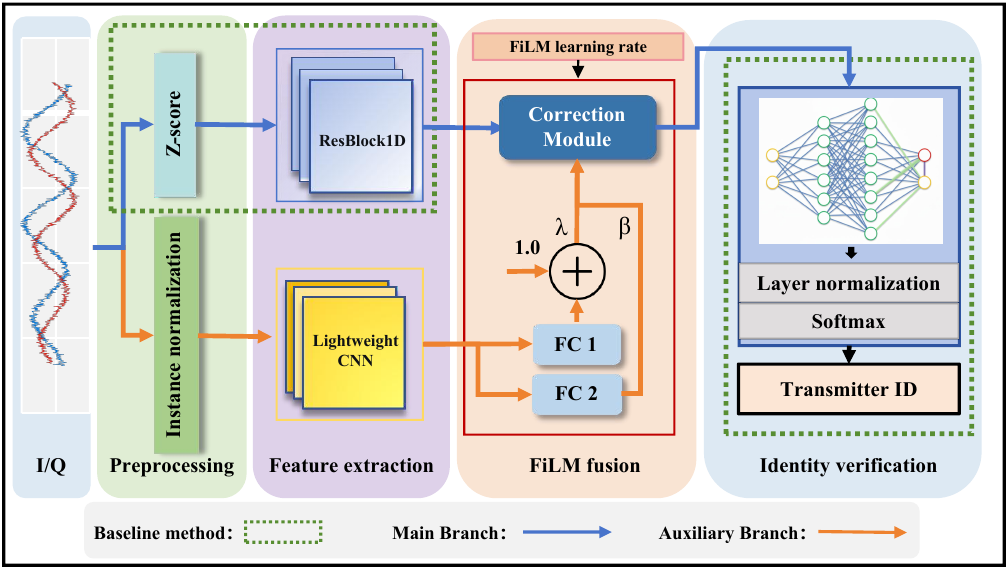}
	\caption{The architecture of the proposed EAFM identifier for closed-set SEI. 
		The dual-branch design processes IQ signals (main branch) 
		and normalized envelope (auxiliary branch) separately, 
		with FiLM-based modulation dynamically adjusting 
		main features through envelope guidance.}
	\label{Proposed_EAFM}
\end{figure}

\subsection{Problem formulation}
\label{problem_formulation}

The goal of SEI is to identify the true transmitter identity from received signals that are affected by both hardware distortions and wireless channel effects. Here, we consider the SEI problem as a mapping from the received signal matrix to a device identity under two distinct settings as
\begin{align}
	\widehat{m}_i^{\mathrm{c}} &= \mathcal{G}_{\mathrm{c}}\left(\bm{Y}_i\right), \quad \widehat{m}_i^{\mathrm{c}} \in \mathcal{I}_M, \label{eq:closed_set_sei} \\
	\widehat{m}_i^{\mathrm{o}} &= \mathcal{G}_{\mathrm{o}}\left(\bm{Y}_i\right), \quad \widehat{m}_i^{\mathrm{o}} \in \mathcal{I}_M \cup \left\{\text{Eve}\right\}, \label{eq:open_set_sei}
\end{align}
\noindent where $\mathcal{G}_{\mathrm{c}}\left(\cdot\right)$ and $\mathcal{G}_{\mathrm{o}}\left(\cdot\right)$ denote the identifiers for the closed-set SEI and the open-set SEV settings, respectively. For the closed-set setting, $\widehat{m}_i^{\mathrm{c}}$ is the estimated transmitter label belonging to the legitimate device set $\mathcal{I}_M$. For the open-set setting, $\widehat{m}_i^{\mathrm{o}}$ is the predicted identity that either corresponds to a legitimate device in $\mathcal{I}_M$ or detects an unknown Eve. $\bm{Y}_i \in \mathbb{R}^{2 \times N}$ is the $i$-th received signal matrix converted from the received signal $y_m\left(t\right)$ after energy normalization and global standardization, i.e.
\begin{equation}
	\label{eq:input_format}
	\bm{Y}_i = \begin{bmatrix}
		\Re\left\{r_i\left(1\right)\right\} & \Re\left\{r_i\left(2\right)\right\} & \cdots & \Re\left\{r_i\left(N\right)\right\} \\
		\Im\left\{r_i\left(1\right)\right\} & \Im\left\{r_i\left(2\right)\right\} & \cdots & \Im\left\{r_i\left(N\right)\right\}
	\end{bmatrix}, \quad i \in \mathcal{I}_S,
\end{equation}
\noindent where $\mathcal{I}_S = \left\{1, 2, \ldots, S\right\}$ is the index set of $S$ received signal segments; $r_i\left(n\right)$ denotes the $n$-th discrete-time sample at the symbol rate; $N$ is the number of samples per segment.

\section{Proposed EAFM identifier for SEI}
\label{chapter:3}

In this section, we propose the EAFM identifier to address the SEI problem under varying Rician $K$-factor conditions. We first establish in Lemma~\ref{lemma1} that the coefficient of variation (CV) of the signal envelope is strictly monotonic with respect to $K$, providing a theoretical basis for using envelope statistics as channel-conditioning information. Motivated by this finding, we design a dual-branch architecture with three cascaded modules, as illustrated in Fig.~\ref{Proposed_EAFM}. The optimization problem for the proposed EAFM identifier is then formulated.

To characterize the relative fluctuation of the envelope for emitter identification, we introduce the CV of the Rician $K$-factor~\cite{Arachchige2022Robust,Hidalgo2022Amplitude}, i.e.
\begin{equation}
	C_v\left(K\right)=\frac{\sqrt{\operatorname{Var}\!\left[R\right]}}{\mathbb{E}\!\left\{R\right\}},
	\label{eq:cv-def}
\end{equation}
\noindent where $R$ denotes the envelope amplitude, $\mathbb{E}\left\{\cdot\right\}$ is the expectation operator, and $\operatorname{Var}\left[\cdot\right]$ represents the variance. The following lemma shows how this function varies with respect to $K$.

\begin{lemma}
	\label{lemma1}
	For the Rician distribution, $C_v(K)$ is a strictly decreasing function of $K$, i.e.
	\begin{equation}
		\frac{\partial C_v(K)}{\partial K} < 0,
		\label{eq:cv-derivative-neg}
	\end{equation}
	\noindent where $\frac{\partial C_v(K)}{\partial K}$ denotes the derivative in $K$.
\end{lemma}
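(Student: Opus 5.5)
The plan is to reduce $C_v(K)$ to a closed form in modified Bessel functions, differentiate it, and find that the sign of the derivative is controlled by a single positive Bessel term.

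\emph{Normalization.} From \eqref{eq:rician_channel}, with $\tilde h_m\sim\mathcal{CN}(0,1)$, the envelope is $R=|h_m|$. Since $C_v$ is scale invariant, we may replace $R$ by $|\sqrt{K}e^{j\theta_m}+\tilde h_m|$, dropping the common factor $1/\sqrt{K+1}$. This is a Rician envelope with LoS amplitude $\nu=\sqrt{K}$ and unit scattered power. Hence $\mathbb{E}\{R^2\}=K+1$, and by the standard Rician first moment,
\begin{equation*}
\mathbb{E}\{R\}=\tfrac{\sqrt{\pi}}{2}\,h(K),\qquad h(K)=e^{-K/2}\Big[(1+K)I_0\!\big(\tfrac K2\big)+K\,I_1\!\big(\tfrac K2\big)\Big],
\end{equation*}
which is $\tfrac{\sqrt\pi}{2}L_{1/2}(-K)$. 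Since $C_v^2=\mathbb{E}\{R^2\}/\mathbb{E}\{R\}^2-1$ and $C_v>0$, proving \eqref{eq:cv-derivative-neg} is equivalent to showing that $\phi(K)=h(K)^2/(1+K)$ is strictly increasing.

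\emph{Derivative of $h$.} This is the main computational step. Using $I_0'=I_1$ and $I_1'(x)=I_0(x)-I_1(x)/x$ at $x=K/2$, together with the chain-rule factor $1/2$, I expect the terms to collapse to
\begin{equation*}
h'(K)=\tfrac12 e^{-K/2}\Big[I_0\!\big(\tfrac K2\big)+I_1\!\big(\tfrac K2\big)\Big].
\end{equation*}
This agrees with the known Laguerre identity $\tfrac{d}{dK}L_{1/2}(-K)=\tfrac12e^{-K/2}(I_0+I_1)$. Checking the cancellation, especially the $I_1/x$ term against the $K I_1$ term, is where the bookkeeping can go wrong, so I will verify it carefully or cite the identity directly.

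\emph{Sign argument.} Next, $\phi'(K)>0$ holds if and only if $2(1+K)h'(K)>h(K)$, because $h>0$. Substituting the expression for $h'$ and cancelling the common factor $e^{-K/2}$, this becomes
\begin{equation*}
(1+K)\big(I_0+I_1\big)>(1+K)I_0+K I_1 \iff I_1\!\big(\tfrac K2\big)>0,
\end{equation*}
which holds for every $K>0$. Therefore $\phi$ is strictly increasing on $(0,\infty)$, and so $C_v$ is strictly decreasing, with $\partial C_v/\partial K = -\phi'/(2C_v\phi^2)<0$.

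\emph{Endpoint $K=0$.} At $K=0$ we have $I_1(0)=0$, so the derivative vanishes there (the Rayleigh limit). I will therefore state the strict inequality for $K>0$ and note that strict monotonicity on $[0,\infty)$ still follows, since the derivative is negative on the open interval.
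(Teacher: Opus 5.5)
Your proof is correct and is essentially the paper's argument: you reduce to showing that $L_{1/2}(-K)^2/(1+K)$ is increasing and boil the derivative condition down to $I_1(K/2)>0$, whereas the paper routes the same computation through $A(K)=e^{K/2}L_{1/2}(-K)$, which is equivalent to your direct formula for $h'$. Your remark that the derivative vanishes at $K=0$ is slightly more careful than the paper. The only slip is cosmetic: since $C_v^2=4/(\pi\phi)-1$, the chain rule gives $\partial C_v/\partial K=-2\phi'/(\pi C_v\phi^2)$ rather than $-\phi'/(2C_v\phi^2)$, which does not affect the sign.
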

\begin{proof}
	See Appendix~\ref{appendix:fano_derivation}.
\end{proof}

Lemma~\ref{lemma1} establishes that the statistical fluctuation pattern of the normalized envelope can serve as measurable channel-conditioning information, thereby motivating the use of envelope statistics to guide adaptive feature correction.

\subsection{EAFM identifier architecture}
\label{eafm_framework}
The proposed EAFM is designed to implement the identifier $\mathcal{G}_{\mathrm{c}}(\cdot,\cdot)$ for achieving robust device identification under varying Rician $K$-factor conditions. Building on the system model in Section~\ref{chapter:system_model_and_problem}, the proposed EAFM identifier is a composition of three cascaded modules: a dual-branch feature extraction module, a FiLM correction module, and an identity classifier module. The identifier maps the IQ-domain input and the normalized envelope to a predicted transmitter label $\widehat{m}_i^{\mathrm{c}}$, which is given by the composition of the three cascaded modules, i.e.
\begin{align}
	\widehat{m}_i^{\mathrm{c}} &= \mathcal{G}_{\mathrm{c}}\left(\bm{Y}_i,\, \hat{\bm{r}}_i\right) \label{eq:eafm_composition_a}, \\
	&= \operatorname*{arg\,max}_{m\in\mathcal{I}_M} \; \Big[ \mathcal{C}\bigl( \mathcal{F}\bigl( \varPhi_{\text{main}}\left(\bm{Y}_i\right),\, \varPhi_{\text{aux}}\left(\hat{\bm{r}}_i\right) \bigr) \bigr) \Big]_m \label{eq:eafm_composition_b},
\end{align}
\noindent where $\hat{\bm{r}}_i \in \mathbb{R}^{N}$ is the normalized envelope of the complex baseband sequence $\{r_i(n)\}_{n=1}^{N}$. $\varPhi_{\text{main}}(\cdot)$ and $\varPhi_{\text{aux}}(\cdot)$ constitute the dual-branch feature extraction module that map $\bm{Y}_i$ and $\hat{\bm{r}}_i$ to device-oriented and channel-conditioning features, respectively. $\mathcal{F}(\cdot,\cdot)$ denotes the FiLM correction module that adaptively modulates the device-oriented features under the guidance of the channel-conditioning features. $\mathcal{C}(\cdot)$ represents the identity classifier module that outputs the predicted logits. 

Building upon the composition in \eqref{eq:eafm_composition_b}, we then elaborate on the architecture and design rationale of each module.

\paragraph{Dual-branch feature extraction module}
To extract both device-specific and channel-conditioning information from the
same received signal, the feature extraction module adopts a parallel dual-branch
design, where the outputs are given by
\begin{align}
	\bm{f}_{\text{main},i} &= \varPhi_{\text{main}}\left(\bm{Y}_i\right)
	\in \mathbb{R}^{d}, \label{eq:main_feature} \\
	\bm{f}_{\text{aux},i} &= \varPhi_{\text{aux}}\left(\hat{\bm{r}}_i\right)
	\in \mathbb{R}^{d_a}, \label{eq:aux_feature}
\end{align}
\noindent where $\Phi_{\text{main}}\left(\cdot\right)$ extracts the device
fingerprint from the IQ-domain input $\bm{Y}_i$;
$\Phi_{\text{aux}}\left(\cdot\right)$ extracts the envelope signature from $\hat{\bm{r}}_i$;
$d$ and $d_a$ denote the dimensions of the main and auxiliary feature spaces,
respectively.

\paragraph{FiLM correction module}
To dynamically modulate the device-oriented features under the guidance of
channel conditions, the FiLM correction module $\mathcal{F}\left(\cdot,\cdot\right)$
adaptively adjusts the main-branch features using information extracted from the
auxiliary branch, which can be given by
\begin{align}
	\bm{f}_{\text{corr},i} &= \mathcal{F}\left( \bm{f}_{\text{main},i},\, \bm{f}_{\text{aux},i} \right), \label{eq:film_operator}\\
	&= \boldsymbol{\gamma}_i \odot \bm{f}_{\text{main},i} + \boldsymbol{\beta}_i ,\label{eq:film_output}
\end{align}
\noindent where $\odot$ denotes element-wise multiplication, and the
feature-wise scale vector $\boldsymbol{\gamma}_i$ and bias vector
$\boldsymbol{\beta}_i$ are generated from the auxiliary features through two
learnable mappings, which can be expressed as
\begin{align}
	\boldsymbol{\gamma}_i &= \bm{1} + \mathcal{G}_{\gamma}
	\left(\bm{f}_{\text{aux},i}\right) \in \mathbb{R}^{d},
	\label{eq:gamma} \\
	\boldsymbol{\beta}_i &= \mathcal{G}_{\beta}\left(\bm{f}_{\text{aux},i}\right)
	\in \mathbb{R}^{d},
	\label{eq:beta}
\end{align}
\noindent where $\bm{1}$ denotes the all-one vector; $\mathcal{G}_{\gamma}\left(\cdot\right)$ 
and $\mathcal{G}_{\beta}\left(\cdot\right)$ are the learnable linear transformations, 
respectively, with $\mathcal{G}_{\gamma}\left(\cdot\right)$ zero-initialized to preserve 
the identity mapping of the main branch features in early training. Specifically, 
$\mathcal{G}_{\gamma}\left(\cdot\right)$ and $\mathcal{G}_{\beta}\left(\cdot\right)$ can be expressed as
\begin{align}
	\mathcal{G}_{\gamma}\left(\bm{f}_{\text{aux},i}\right) &=
	\bm{W}_{\gamma} \bm{f}_{\text{aux},i} + \bm{b}_{\gamma},
	\label{eq:film_linear_gamma} \\
	\mathcal{G}_{\beta}\left(\bm{f}_{\text{aux},i}\right) &=
	\bm{W}_{\beta} \bm{f}_{\text{aux},i} + \bm{b}_{\beta},
	\label{eq:film_linear_beta}
\end{align}
\noindent where $\bm{W}_{\gamma}, \bm{W}_{\beta} \in \mathbb{R}^{d
	\times d_a}$ are the weight matrices and $\bm{b}_{\gamma}, \bm{b}_{\beta}
\in \mathbb{R}^{d}$ are the bias vectors.

\paragraph{Identity classifier module}
After the FiLM correction, the modulated features $\bm{f}_{\text{corr},i}$ are fed into the identity classifier module $\mathcal{C}\left(\cdot\right)$, which directly maps them to the predicted transmitter label as
\begin{align}
	\widehat{m}_i^{\mathrm{c}} &= \operatorname*{arg\,max}_{m\in\mathcal{I}_M} \; \left[ \mathcal{C}\left( \bm{f}_{\text{corr},i} \right) \right]_m, \label{eq:classifier_output_a} \\
	&= \operatorname*{arg\,max}_{m\in\mathcal{I}_M} \; \left[ \bm{W} \bm{f}_{\text{corr},i} + \bm{b} \right]_m, \label{eq:classifier_output_b}
\end{align}
\noindent where $\bm{W}\in\mathbb{R}^{M\times d}$ and $\bm{b}\in\mathbb{R}^{M}$ are the weight matrix and bias vector of the linear classifier.

\subsection{Optimization of EAFM identifier}
\label{eafm_optimization}

In this subsection, we formulate the training objective for the EAFM identifier. 
Let $\left\{\left(\bm{Y}_i, \hat{\bm{r}}_i, m_i^{\mathrm{c}}\right)\right\}_{i \in \mathcal{I}_S}$ 
denote the training dataset of size $S$, where $m_i^{\mathrm{c}} \in \mathcal{I}_M$ is the device label. 
Thus, the optimization problem can be given by
\begin{equation}
	\boldsymbol{\theta}^{*} = \operatorname*{arg\,min}_{\boldsymbol{\theta}} \; 
	\frac{1}{S} \sum_{i \in \mathcal{I}_S} \mathcal{L}_{\text{CE}}\left( \bm{s}_i,\, m_i^{\mathrm{c}} \right),
	\label{eq:optimization}
\end{equation}
\noindent where $\bm{s}_i = \mathcal{C}\left( \bm{f}_{\text{corr},i} \right)$; $\boldsymbol{\theta}^{*}$ and $\boldsymbol{\theta}$ denote the optimal parameter and the trainable parameters, respectively; 
and the cross-entropy loss is defined as
\begin{equation}
	\mathcal{L}_{\text{CE}}\left( \bm{s}_i,\, m_i^{\mathrm{c}} \right) = -\sum_{m=1}^{M} \mathbb{I}\left(m_i^{\mathrm{c}} = m\right) \log \hat{p}_{i,m},
	\label{eq:ce_loss}
\end{equation}
\noindent where $\mathbb{I}\left(\cdot\right)$ is the indicator function, and $\hat{p}_{i,m}$ is the softmax-normalized probability, i.e.
\begin{equation}
	\hat{p}_{i,m} = \frac{e^{s_{i,m}}}{\sum_{\ell=1}^{M} e^{s_{i,\ell}}}.
	\label{eq:softmax}
\end{equation}

To solve \eqref{eq:optimization} via stochastic gradient descent, the trainable
parameters are first partitioned into two disjoint subsets, i.e.
\begin{align}
	\boldsymbol{\theta} = \boldsymbol{\theta}_{\text{base}} \cup \boldsymbol{\theta}_{\text{film}}, \label{eq:param_union} \\
	\boldsymbol{\theta}_{\text{base}} \cap \boldsymbol{\theta}_{\text{film}} = \varnothing, \label{eq:param_disjoint}
\end{align}
\noindent where
$\boldsymbol{\theta}_{\text{film}} = \left\{\bm{W}_\gamma, \bm{b}_\gamma,
\bm{W}_\beta, \bm{b}_\beta\right\}$ denotes the parameters of the FiLM
correction module defined in \eqref{eq:film_linear_gamma} and
\eqref{eq:film_linear_beta}, and $\boldsymbol{\theta}_{\text{base}}$ collects all
remaining parameters from the dual-branch feature extraction module and the
classifier module.  Let $\eta\left(\ell\right)$ denote the learning rate at the $\ell$-th gradient update. Then the parameter updates are performed with differentiated learning rates as
\begin{align}
	\boldsymbol{\theta}_{\text{base}}^{\left(e+1\right)} &=
	\boldsymbol{\theta}_{\text{base}}^{\left(e\right)} - \eta\left(e\right) \,
	\nabla_{\boldsymbol{\theta}_{\text{base}}} \mathcal{L}_{\text{CE}},
	\label{eq:update_base} \\
	\boldsymbol{\theta}_{\text{film}}^{\left(e+1\right)} &=
	\boldsymbol{\theta}_{\text{film}}^{\left(e\right)} - \alpha \, \eta\left(e\right) \,
	\nabla_{\boldsymbol{\theta}_{\text{film}}} \mathcal{L}_{\text{CE}},
	\label{eq:update_film}
\end{align}
\noindent where $\alpha$ is the film learning rate multiplier. This asymmetric update strategy allows the FiLM correction module to adapt at a different pace from the backbone feature extractor. Thus ends the description of the proposed EAFM identifier.

\begin{remark}
	The proposed EAFM framework leverages the channel-conditioning information embedded in the normalized envelope to adaptively correct the IQ-domain device features via a dual-branch architecture. Lemma~\ref{lemma1} establishes the monotonic relationship between the envelope CV and the Rician $K$-factor, motivating the use of envelope statistics to guide the FiLM-based modulation. This mechanism enables the identifier to adjust its outputs dynamically under varying channel conditions, thereby achieving robust closed-set emitter identification across channels.
\end{remark}

\section{Proposed EAFM-MD Verifier for SEV}
\label{chapter:4}
In this section, we propose the EAFM-MD verifier for SEV, so that our EAFM identifier can detect unknown spoofing transmitters without attack samples during training, as depicted in Fig.~\ref{fig:sev_method}. Firstly, the core innovation replaces the softmax classifier with a fingerprint library in the FiLM-corrected feature space, where each legitimate device is represented by its statistical distribution. This subsequently enables verification of unknown spoofers using the Mahalanobis distance. Finally, we optimize the system using a metric-learning objective that learns compact, separable feature representations, along with adaptive threshold selection that balances identification accuracy and spoofing-detection performance.

\begin{figure*}[t]
	\centering
	\includegraphics[width=0.85\textwidth]{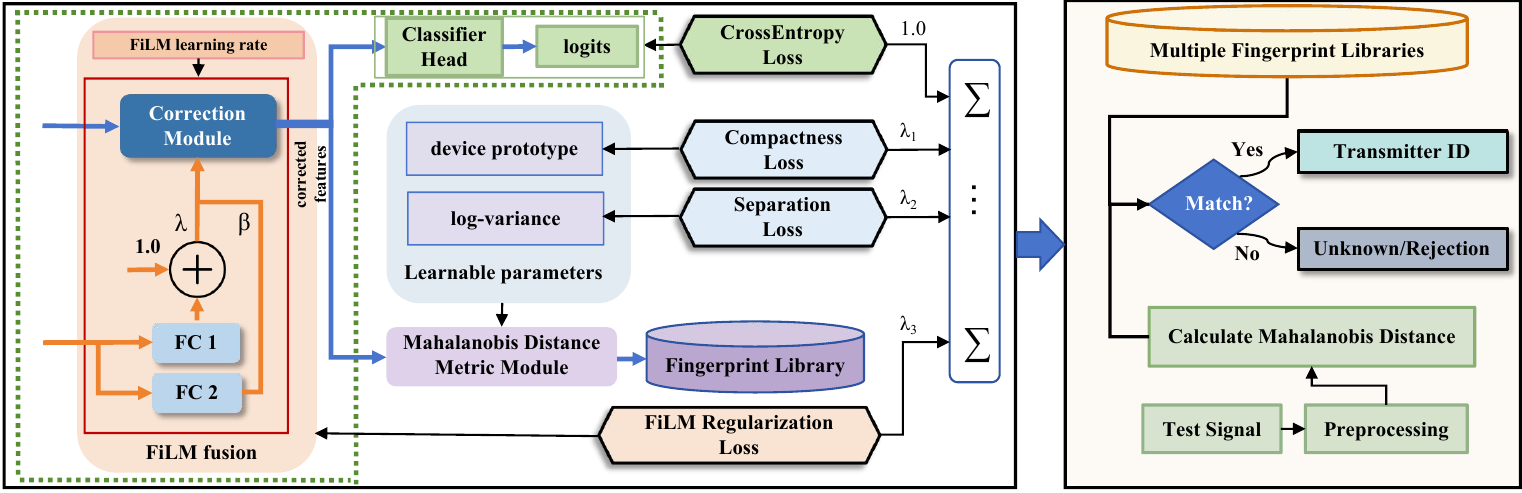}
	\caption{EAFM identifier to the SEV setting by introducing a fingerprint library that stores per-device centroids and inverse covariance matrices, as well as the within-class Mahalanobis distances of training samples for threshold determination. During deployment, the verifier computes the Mahalanobis distance between the extracted feature and each known device centroid, rejects the sample as Eve if the minimum distance exceeds the device-specific threshold, and otherwise identifies the device using the classifier logits.}
	\label{fig:sev_method}
\end{figure*}

\subsection{EAFM-MD Verifier Architecture}
\label{eafm_md_framework}

The proposed EAFM-MD verifier $\mathcal{G}_{\mathrm{o}}\left(\cdot,\cdot\right)$ is built upon the EAFM identifier $\mathcal{G}_{\mathrm{c}}\left(\cdot,\cdot\right)$ from Section~\ref{chapter:3} and extends it to implement the verification mapping defined in \eqref{eq:open_set_sei}. Specifically, we replace the softmax classifier of $\mathcal{G}_{\mathrm{c}}$ with three cascaded stages: a feature extraction stage $\mathcal{E}\left(\cdot,\cdot\right)$ that maps the input to a FiLM-corrected feature vector, a fingerprint library and matching stage that constructs the per-device statistical model from the training corpus and computes the Mahalanobis distances to known device distributions during deployment, and a verification decision stage $\mathcal{D}\left(\cdot,\cdot\right)$ that determines the predicted identity by jointly considering the distance vector and the classifier logits. The predicted identity is obtained by composing these three stages as
\begin{align}
	\widehat{m}_i^{\mathrm{o}} &= \mathcal{G}_{\mathrm{o}}\left(\bm{Y}_i,\, \hat{\bm{r}}_i\right) \label{eq:eafm_md_composition_a}, \\
	&= \mathcal{D}\left( \mathcal{M}\left( \mathcal{E}\left(\bm{Y}_i,\, \hat{\bm{r}}_i\right) \right),\, \bm{s}_i \right) \label{eq:eafm_md_composition_b},
\end{align}
\noindent where $\bm{s}_i$ denotes the classifier logits inherited from the EAFM identifier. In what follows, we elaborate on the three stages, beginning with the feature extraction stage, followed by the fingerprint library and matching stage, and concluding with the verification decision stage.

\paragraph{Feature extraction stage}
The feature extraction stage inherits the architecture of the EAFM identifier. The FiLM-corrected feature vector is obtained by
\begin{align}
	\mathcal{E}\left(\bm{Y}_i,\, \hat{\bm{r}}_i\right) &= \bm{f}_{\text{corr},i} \in \mathbb{R}^{d}, \label{eq:feature_extraction} \\
	&= \mathcal{F}\left( \varPhi_{\text{main}}\left(\bm{Y}_i\right),\, \varPhi_{\text{aux}}\left(\hat{\bm{r}}_i\right) \right) \label{eq:feature_extraction_film},
\end{align}
\noindent where $\varPhi_{\text{main}}\left(\cdot\right)$, $\varPhi_{\text{aux}}\left(\cdot\right)$, and $\mathcal{F}\left(\cdot,\cdot\right)$ are the main branch extractor, auxiliary branch extractor, and FiLM correction module defined in \eqref{eq:main_feature}, \eqref{eq:aux_feature}, and \eqref{eq:film_operator}, respectively. This stage is identical to the feature extraction component of the EAFM identifier in \eqref{eq:eafm_composition_b}, ensuring that all representational capabilities developed in the identification setting are retained.

\paragraph{Fingerprint library and matching stage}
The fingerprint library and the matching stage constitute the core component that enables the verifier to detect unknown transmitters without requiring any attack signals during training. The matching stage $\mathcal{M}\left(\cdot\right)$ computes a distance vector $\bm{d}_i$ by evaluating the Mahalanobis distance between the extracted feature $\bm{f}_{\text{corr},i}$ obtained from $\mathcal{E}\left(\cdot,\cdot\right)$ and each known device distribution using the library parameters, thereby quantifying the statistical conformity of the input to each legitimate device. Given the extracted feature $\bm{f}_{\text{corr},i}$, the matching stage is formally defined as
\begin{align}
	\mathcal{M}\bigl(\bm{f}_{\text{corr},i}\bigr) &= \bm{d}_i \in \mathbb{R}^{M}, \label{eq:matching_stage} \\
	\left[\bm{d}_i\right]_m &= \mathsf{d}_M\!\bigl(\bm{f}_{\text{corr},i};\, \boldsymbol{\mu}_m,\, \boldsymbol{\Sigma}_m^{-1}\bigr), \quad m \in \mathcal{I}_M, \label{eq:matching_element}
\end{align}
\noindent where $d_M\!\left(\cdot;\, \boldsymbol{\mu}_m,\, \boldsymbol{\Sigma}_m^{-1}\right)$ is the Mahalanobis distance, i.e.
\begin{equation}
	\ d_M\!\left(\bm{f};\, \boldsymbol{\mu}_m,\, \boldsymbol{\Sigma}_m^{-1}\right) = \sqrt{ \left( \bm{f} - \boldsymbol{\mu}_m \right)^{\mathsf{T}} \boldsymbol{\Sigma}_m^{-1} \left( \bm{f} - \boldsymbol{\mu}_m \right) },
	\label{eq:mahalanobis_distance}
\end{equation}
\noindent where $\boldsymbol{\mu}_m$ and $\boldsymbol{\Sigma}_m^{-1}$ are the centroid and inverse covariance matrix of device $m$, which are stored in the fingerprint library $\mathcal{L}$. The resulting distance vector $\bm{d}_i$ is then passed to the verification decision stage for identity determination.

The fingerprint library $\mathcal{L}$ is constructed from the training corpus using the trained EAFM backbone. It stores per-device statistical parameters that characterize the distribution of legitimate samples in the FiLM-corrected feature space, thereby enabling the matching stage $\mathcal{M}\left(\cdot\right)$ to compute the distance vector during deployment. Specifically, let $\mathcal{S}_m = \left\{i \mid m_i^{\mathrm{c}} = m\right\}$ denote the index set of training samples belonging to device $m$, and let $S_m = |\mathcal{S}_m|$ be the number of such samples. The fingerprint library is defined as a collection of triplets, one per known device, which can be given by
\begin{equation}
	\mathcal{L} = \left\{ \bigl( \boldsymbol{\mu}_m,\, \boldsymbol{\Sigma}_m^{-1},\, \mathcal{W}_m \bigr) \right\}_{m=1}^M,
	\label{eq:fingerprint_library_definition}
\end{equation}
\noindent where $\boldsymbol{\mu}_m$ is the centroid of device $m$, $\boldsymbol{\Sigma}_m^{-1}$ is the inverse covariance matrix used for the Mahalanobis distance computation at deployment, and $\mathcal{W}_m$ denotes the set of within-class Mahalanobis distances of the training samples of device $m$ for threshold determination.

The centroid is estimated as the sample mean of the FiLM-corrected features of device $m$, i.e.
\begin{equation}
	\boldsymbol{\mu}_m = \frac{1}{S_m} \sum_{i \in \mathcal{S}_m} \bm{f}_{\text{corr},i}.
	\label{eq:centroid_estimation}
\end{equation}

For the full covariance matrix, we adopt the sample covariance with a regularization term that ensures positive definiteness, which can be formulated as
\begin{align}
	\boldsymbol{\Sigma}_m &= \frac{1}{S_m - 1} \sum_{i \in \mathcal{S}_m} \left(\bm{f}_{\text{corr},i} - \boldsymbol{\mu}_m\right)\left(\bm{f}_{\text{corr},i} - \boldsymbol{\mu}_m\right)^{\mathsf{T}} \nonumber \\
	&\quad + \epsilon \bm{I}_d,
	\label{eq:covariance_estimation}
\end{align}
\noindent where $\epsilon > 0$ is a small regularization constant and $\bm{I}_d$ is the identity matrix. Its inverse $\boldsymbol{\Sigma}_m^{-1}$ is then computed and stored in the fingerprint library.

For each training sample belonging to device $m$, we compute the within-class Mahalanobis distance to its own centroid using \eqref{eq:mahalanobis_distance}, and the resulting set of distances is stored in the fingerprint library as the statistical basis for threshold determination, i.e.
\begin{equation}
	\mathcal{W}_m = \left\{ d_M\!\left(\bm{f}_{\text{corr},i};\, \boldsymbol{\mu}_m,\, \boldsymbol{\Sigma}_m^{-1}\right) \right\}_{i \in \mathcal{S}_m}.
	\label{eq:within_class_distance_set}
\end{equation}

We then derive the decision threshold for device $m$ from the empirical distribution of its within-class distances. Specifically, the threshold is set as the $p$-th percentile of $\mathcal{W}_m$, which can be expressed as
\begin{equation}
	\tau_m = \left( F_m \right)^{-1}\left( p \right),
	\label{eq:threshold_definition}
\end{equation}
\noindent where $p \in (0,1)$ is a configurable percentile parameter and $F_m(\cdot)$ denotes the empirical cumulative distribution function of $\mathcal{W}_m$ for device $m$. This formulation decouples the fingerprint library data $\mathcal{W}_m$ from the thresholding policy governed by $p$: the threshold can be adjusted at test time by recomputing the percentile on the stored within-class distances without reconstructing the entire fingerprint library.

\paragraph{Verification decision stage}
The verification decision stage $\mathcal{D}\left(\cdot,\cdot\right)$ determines the predicted identity via a two-stage rule that first rejects unknown transmitters by comparing the minimum Mahalanobis distance against the device-specific threshold, and then identifies known devices using the classifier logits. Given the distance vector $\bm{d}_i$ and logits $\bm{s}_i$, this decision can be denoted by
\begin{align}
	\widehat{m}_i^{\mathrm{o}} &= \mathcal{D}\left(\bm{d}_i,\, \bm{s}_i\right), \label{eq:open_set_decision_a} \\
	&= 
	\begin{cases}
		\displaystyle \operatorname*{arg\,max}_{m \in \mathcal{I}_M} \left[\bm{s}_i\right]_m, & \text{if } \min\limits_{m \in \mathcal{I}_M} \left[\bm{d}_i\right]_m \leq \tau_m, \\[6pt]
		\text{Eve}, & \text{otherwise},
	\end{cases} \label{eq:open_set_decision_b}
\end{align}
\noindent where $\tau_m$ is the device-specific threshold derived from the fingerprint library via \eqref{eq:threshold_definition}. The rationale of this two-stage design is that the Mahalanobis distance in the FiLM-corrected feature space captures the statistical conformity of the input to the known device distribution and thus serves as a reliable indicator for detecting unknown transmitters, while the classifier logits retain the fine-grained discriminability required for multi-device identification among legitimate devices.

\subsection{Optimization of EAFM-MD Verifier}
\label{eafm_md_optimization}

Unlike the identification setting in \eqref{eq:optimization}, the training set does not contain any attack signals. The overall objective is a weighted combination of four loss terms, which can be expressed as
\begin{equation}
	\begin{aligned}
		\boldsymbol{\psi}^{*} &= \operatorname*{arg\,min}_{\boldsymbol{\psi}} \; 
		\frac{1}{S} \sum_{i \in \mathcal{I}_S} \Big[ \mathcal{L}_{\text{CE}}\left( \bm{s}_i,\, m_i^{\mathrm{c}} \right) \\
		&\quad + \lambda_1 \mathcal{L}_{\text{compact}}^{(i)} + \lambda_2 \mathcal{L}_{\text{sep}} + \lambda_3 \mathcal{L}_{\text{film}}^{(i)} \Big],
	\end{aligned}
	\label{eq:eafm_md_optimization}
\end{equation}
\noindent where $\boldsymbol{\psi}^{*}$ and $\boldsymbol{\psi}$ denote the optimal parameter and the trainable parameters, respectively; $\mathcal{L}_{\text{compact}}^{(i)}$ is the compactness loss~\cite{refWen} that penalizes feature-to-centroid distances; $\mathcal{L}_{\text{sep}}$ is the separation loss~\cite{refNguyen} that enforces inter-centroid margins; $\mathcal{L}_{\text{film}}^{(i)}$ is the FiLM regularization loss that stabilizes the modulation parameters; and $\lambda_1$, $\lambda_2$, $\lambda_3$ are the corresponding weighting hyperparameters.\footnote{The compactness and separation losses follow standard metric learning formulations, while the innovation of EAFM-MD lies in its integration with the adaptive FiLM backbone and Mahalanobis fingerprint library, rather than in the loss design itself.}

\paragraph{Compactness loss}
The compactness loss penalizes the diagonal Mahalanobis distance between each training sample's feature and its corresponding device centroid, which is defined as
\begin{equation}
	\mathcal{L}_{\text{compact}}^{(i)} = \sqrt{ \left( \bm{f}_{\text{corr},i} - \boldsymbol{\mu}_{m_i^{\mathrm{c}}} \right)^{\mathsf{T}} \boldsymbol{\Lambda}_{m_i^{\mathrm{c}}}^{-1} \left( \bm{f}_{\text{corr},i} - \boldsymbol{\mu}_{m_i^{\mathrm{c}}} \right) },
	\label{eq:compact_loss}
\end{equation}
\noindent where $\boldsymbol{\mu}_{m_i^{\mathrm{c}}}$ is the learnable centroid of the device indexed by the label $m_i^{\mathrm{c}}$, and $\boldsymbol{\Lambda}_{m}^{-1} = \operatorname{diag}\bigl(\sigma_{m,1}^{-2}, \ldots, \sigma_{m,d}^{-2}\bigr)$ is the learnable diagonal precision matrix for device $m$, which is parameterized through $\log \boldsymbol{\sigma}_m^2$ to adaptively identify reliable feature dimensions for each device.

Notably, the diagonal precision approximation reduces the number of learnable parameters from $\mathcal{O}\left(d^2\right)$ to $\mathcal{O}\left(d\right)$ per device and avoids inverting a full covariance matrix during training. However, it discards inter-dimensional correlations, which we address during deployment by constructing the fingerprint library with full covariance matrices estimated from the training corpus, as described in Section~\ref{eafm_md_framework}.

\paragraph{Separation loss}
To enhance inter-class discriminability and prevent centroid collapse, the separation loss adopts a hinge-style formulation that penalizes pairs of centroids that are insufficiently separated, i.e.
\begin{equation}
	\mathcal{L}_{\text{sep}} = \frac{1}{N_{\text{pair}}} \sum_{m=1}^{M} \sum_{m' > m} \max\!\left(0,\, \delta - \|\boldsymbol{\mu}_m - \boldsymbol{\mu}_{m'}\|_2 \right),
	\label{eq:sep_loss}
\end{equation}
\noindent where $N_{\text{pair}}  = M\left(M-1\right)/2$ is the number of device pairs, $\boldsymbol{\mu}_m$ and $\boldsymbol{\mu}_{m'}$ are learnable centroids of devices $m$ and $m'$, and $\delta$ is a positive margin hyperparameter. Unlike a simple negative Euclidean distance penalty, the hinge formulation only activates when the inter-centroid distance falls below $\delta$, thereby focusing the gradient on poorly separated pairs while allowing well-separated centroids to remain unconstrained.

\paragraph{FiLM regularization loss}
We stabilize the FiLM correction mechanism during joint training with the fingerprint parameters by applying a regularization loss that penalizes deviations of the FiLM modulation parameters from their identity-preserving initialization, i.e.
\begin{equation}
	\mathcal{L}_{\text{film}}^{(i)} = \|\boldsymbol{\gamma}_i - \bm{1}\|_2^2 + \|\boldsymbol{\beta}_i\|_2^2,
	\label{eq:film_reg_loss}
\end{equation}
\noindent where $\boldsymbol{\gamma}_i$ and $\boldsymbol{\beta}_i$ are the FiLM modulation parameters defined in \eqref{eq:film_output}. This loss encourages the FiLM modulation to remain close to the identity operation unless substantial evidence from the envelope branch suggests otherwise.

We address \eqref{eq:eafm_md_optimization} with stochastic gradient descent by first partitioning the trainable parameters into three disjoint subsets, which is written as
\begin{align}
	\boldsymbol{\psi} &= \boldsymbol{\psi}_{\text{base}} \cup \boldsymbol{\psi}_{\text{film}} \cup \boldsymbol{\psi}_{\text{fingerprint}}, \label{eq:eafm_md_param_partition} \\
	\boldsymbol{\psi}_{\text{fingerprint}} &= \left\{\boldsymbol{\mu}_1, \ldots, \boldsymbol{\mu}_M,\, \log \boldsymbol{\sigma}_1^2, \ldots, \log \boldsymbol{\sigma}_M^2\right\}, \label{eq:eafm_md_fingerprint_params}
\end{align}
\noindent where $\boldsymbol{\psi}_{\text{base}}$ collects the parameters of the dual-branch feature extraction module and the classifier module, $\boldsymbol{\psi}_{\text{film}}$ collects the parameters of the FiLM correction module defined in \eqref{eq:film_linear_gamma} and \eqref{eq:film_linear_beta}, and $\boldsymbol{\psi}_{\text{fingerprint}}$ collects the learnable centroid and log-variance parameters. The parameter updates follow the differentiated learning rate strategy established in \eqref{eq:update_base} and \eqref{eq:update_film}, with $\boldsymbol{\psi}_{\text{fingerprint}}$ updated at the base learning rate alongside $\boldsymbol{\psi}_{\text{base}}$ and $\boldsymbol{\psi}_{\text{film}}$ updated at a modulated rate $\alpha\,\eta\left(e\right)$.

\subsection{Deployment and Optimal $\alpha^{*}$ Selection}
\label{deployment}

After training, the deployment performance is critically influenced by the hyperparameter $\alpha$, which governs the trade-off between FiLM adaptability and feature stability in the learned fingerprint space.
Because no single value of $\alpha$ is universally optimal across different channel conditions and device sets, we train $L$ independent EAFM-MD verifiers over a candidate set $\{\alpha_1, \alpha_2, \ldots, \alpha_L\}$, each yielding a distinct set of backbone parameters $\boldsymbol{\psi}_{\alpha_\ell}$ and a corresponding fingerprint library $\mathcal{L}^{(\alpha_\ell)}$. The optimal candidate $\alpha^{*}$ is then selected on a held-out validation set by balancing identification accuracy against spoofing detection performance, as detailed below.

Since both the identification accuracy on legitimate devices and the detection rate of unknown transmitters are equally important for SEV, we select the optimal $\alpha^{*}$ by maximizing their equally-weighted harmonic mean, i.e.
\begin{equation}
	\alpha^{*} = \operatorname*{arg\,max}_{\alpha_\ell \in \{\alpha_1,\ldots,\alpha_L\}} \;
	\frac{2\,\mathrm{Acc}^{(\alpha_\ell)} \cdot \mathrm{PD}^{(\alpha_\ell)}}{\mathrm{Acc}^{(\alpha_\ell)} + \mathrm{PD}^{(\alpha_\ell)}},
	\label{eq:best_alpha_selection}
\end{equation}
\noindent where $\mathrm{Acc}^{(\alpha_\ell)}$ and $\mathrm{PD}^{(\alpha_\ell)}$ denote the identification accuracy on legitimate devices and the detection rate of unknown transmitters, respectively, which are formulated as
\begin{align}
	\mathrm{Acc}^{(\alpha_\ell)} &= 
	\frac{\bigl|\bigl\{ i : \widehat{m}_i^{\mathrm{o}}{}^{(\alpha_\ell)} = m_i^{\mathrm{c}} \mid m_i^{\mathrm{c}} \in \mathcal{I}_M \bigr\}\bigr|}
	{\bigl|\bigl\{ i : m_i^{\mathrm{c}} \in \mathcal{I}_M \bigr\}\bigr|}, \label{eq:identification_accuracy} \\[4pt]
	\mathrm{PD}^{(\alpha_\ell)} &= 
	\frac{\bigl|\bigl\{ i : \widehat{m}_i^{\mathrm{o}}{}^{(\alpha_\ell)} = \text{Eve} \mid m_i^{\mathrm{c}} = \text{Eve} \bigr\}\bigr|}
	{\bigl|\bigl\{ i : m_i^{\mathrm{c}} = \text{Eve} \bigr\}\bigr|}. \label{eq:detection_rate}
\end{align}

The optimal $\alpha^{*}$ selection procedure is summarized in Algorithm~\ref{alg:alpha_selection}.

\begin{algorithm}[t]
	\caption{Optimal $\alpha^{*}$ Selection Algorithm}
	\label{alg:alpha_selection}
	\begin{algorithmic}
		\REQUIRE $\{\boldsymbol{\psi}_{\alpha_\ell},\, \mathcal{L}^{(\alpha_\ell)}\}_{\ell=1}^{L}$, $\{(\bm{Y}_i, \hat{\bm{r}}_i, m_i^{\mathrm{c}})\}$, and $p$;
		\ENSURE Optimal $\alpha^{*}$.
		\FOR{$\ell = 1$ to $L$}
		\STATE Load $\boldsymbol{\psi}_{\alpha_\ell}$ and $\mathcal{L}^{(\alpha_\ell)}$;
		\STATE Extract $\{\bm{f}_{\text{corr},i}^{(\alpha_\ell)}\}$ using $\boldsymbol{\psi}_{\alpha_\ell}$;
		\STATE Compute $\{\bm{d}_i^{(\alpha_\ell)}\}$ via \eqref{eq:matching_stage}--\eqref{eq:matching_element};
		\STATE Obtain $\{\widehat{m}_i^{\mathrm{o}}{}^{(\alpha_\ell)}\}$ via \eqref{eq:open_set_decision_b};
		\STATE Evaluate $\mathrm{Acc}^{(\alpha_\ell)}$ and $\mathrm{PD}^{(\alpha_\ell)}$ via \eqref{eq:identification_accuracy}--\eqref{eq:detection_rate};
		\ENDFOR
		\STATE Select $\alpha^{*}$ via \eqref{eq:best_alpha_selection};
		\RETURN $\alpha^{*}$.
	\end{algorithmic}
\end{algorithm}

\begin{remark}
	\label{remark:eafm_vs_eafmmd}
	The proposed EAFM-MD verifier extends the EAFM identifier to SEV without altering the dual-branch architecture or the FiLM correction mechanism. Unlike the EAFM identifier, which relies on a softmax classifier that inevitably assigns every input to a known device, the EAFM-MD verifier replaces the softmax with a fingerprint library that stores per-device centroids, inverse covariance matrices, and within-class Mahalanobis distances, and makes the verification decision by comparing the minimum Mahalanobis distance against device-specific thresholds. This design enables the detection of unknown spoofing transmitters without requiring any attack samples during training.
\end{remark}

\setcounter{table}{1}
\begin{table*}[t]
	\centering
	\caption{\textsc{Transmitter parameters}}
	\label{tab:device-params}
	\begin{tabular}{lcccccc}
		\toprule
		Device & \(G\) & \phantom{-}\(\zeta\) \(\cdot\frac{\pi}{180}\) & \( a^{\mathrm{ST}} \) & \( f^{\mathrm{ST}} \) \textmd{(MHz)} & \(\xi_m \times 10^{-3}\) & [$b_1$,$b_2$,$b_3$] \\
		\midrule
		Dev1 & 0.9998 & -0.0180\ & 0.0082 & 0.129 & 1.3 + 8.2\(j\)  & [1.00, 0.50, 0.30] \\
		Dev2 & 1.0056 & \phantom{-}0.0175\ & 0.0075 & 0.132 & 1.5 + 7.2\(j\) & [1.00, 0.08, 0.60] \\
		Dev3 & 1.0102 & \phantom{-}0.0120\ & 0.0070 & 0.123 & 1.1 + 6.8\(j\) & [1.00, 0.01, 0.01] \\
		Dev4 & 0.9992 & \phantom{-}0.0030\ & 0.0087 & 0.135 & 1.7 + 9.0\(j\) & [1.00, 0.01, 0.40] \\
		Dev5 & 0.9500 & \phantom{-}0.0300\ & 0.0195 & 0.165 & 3.2 + 13.5\(j\) & [1.00, 0.95, 0.80] \\
		\bottomrule
	\end{tabular}
\end{table*}

\setcounter{table}{0}
\begin{table}[t]
	\centering
	\caption{System Simulation Parameters}
	\label{tab:system-params}
	\begin{tabular}{lc}
		\toprule
		\textbf{Parameter} & \textbf{Value} \\
		\midrule
		QPSK symbol number $N_{\text{symbols}}$ & 512 \\
		Samples per symbol $\text{SPS}$ & 4 \\
		Root-raised cosine rolloff $\alpha$ & 0.3 \\
		RRC filter span $M_{\text{span}}$ & 8 \\
		Sampling rate $F_s$ & 30.72 MHz \\
		Carrier frequency $F_c$ & 2.4 GHz \\
		Transmitter antenna height $h_{\text{BS}}$ & 10 m \\
		Receiver antenna height $h_{\text{UT}}$ & 100 m \\
		Two-dimensional distance $d_{2\text{D}}$ & 500 m \\
		Three-dimensional distance $d_{3\text{D}}$ & 508.0 m \\
		Rician $K$-factors & $\left\{-2,2,6,10\right\}$ dB \\
		\bottomrule
	\end{tabular}
\end{table}

\setcounter{table}{2}
\begin{table}[t]
	\centering
	\caption{Training Configuration}
	\label{tab:training-loss-hyperparams}
	\begin{tabular}{lc}
		\toprule
		\textbf{Parameter} & \textbf{Default Value} \\
		\midrule
		Total training epochs & 200 \\
		Batch size & 256 \\
		Initial learning rate & 5e-4 \\
		Minimum learning rate & 1e-5 \\
		Weight decay (L2 regularization) & 5e-4 \\
		Warmup epochs & 10 \\
		$\left\{\lambda_1, \lambda_2, \lambda_3\right\}$ & $\left\{0.2, 0.2, 0.2\right\}$ \\
		Gradient clipping max norm & 1.0 \\
		Target distance for class separation & 5.0 \\
		Covariance regularization $\epsilon$ & $10^{-6}$ \\
		\bottomrule
	\end{tabular}
\end{table}

\section{Numerical Results}
In this section, we evaluate the performance of the proposed EAFM identifier for SEI and the proposed EAFM-MD verifier for SEV under cross-Rician $K$-factor OOD settings and unknown spoofing attacks. We first describe the simulation setup and baselines, and then present numerical results for three scenarios. The first scenario examines identification accuracy across mismatched $K$ conditions. The second scenario evaluates verification with spoofing attack devices. The third scenario investigates identification and verification with both cross-channel OOD and unknown-device detection.

\subsection{Simulation setup and baselines}
\label{sec:sim-setup}

We use a synthetic dataset generated from the Rician fading channel model for model training and evaluation. The dataset considers RF hardware impairments and multipath propagation. The system parameters used in the simulation are shown in Table~\ref{tab:system-params}. Referencing the 3GPP UMa (Urban Macro) scenario, the transmitter antenna height is set to $h_{\text{BS}}=10\text{ m}$, receiver antenna height $h_{\text{UT}}=100\text{ m}$, and two-dimensional distance $d_{2\text{D}}=500\text{ m}$. The three-dimensional distance is calculated as
\begin{equation}
	d_{3\text{D}} = \sqrt{d_{2\text{D}}^2 + (h_{\text{UT}} - h_{\text{BS}})^2} \approx 508.0\text{ m}.
\end{equation}

The transmitter integrates multiple RF hardware impairments as described in Section \ref{chapter:system_model_and_problem}, with device parameters detailed in Table~\ref{tab:device-params}. Specifically, Dev5 represents the spoofing attack device used for testing, while the remaining devices are legitimate. Furthermore, we employ a Rician fading multipath channel model with the following 
multipath delay and power configuration: Path delays are $\{0, 80, 200, 570, 1090, 
1730, 2510\}$ ns with corresponding power levels $\{0.0, -2.7, -3.0, -4.6, -7.5, 
-10.6, -13.1\}$ dB. The early stopping patience is set to 30 for classification and 20 for detection, and the other training hyperparameters are summarized in Table~\ref{tab:training-loss-hyperparams}.

\setcounter{table}{3}
\begin{table*}[t]
	\centering
	\caption{Device Acc of Different Methods under Various Rician $K$ Conditions}
	\label{tab:rician_k_factor}
	\begin{tabular}{|c|ccc|c|ccc|c|ccc|c|}
		\toprule
		\textbf{Model} & \multicolumn{4}{c|}{\textbf{ID(4dB)}} & \multicolumn{4}{c|}{\textbf{OOD(-5dB)}} & \multicolumn{4}{c|}{\textbf{OOD(-10dB)}} \\
		\midrule
		SNR/Avg.& 0 dB & 5 dB & 10 dB & Avg. & 0 dB & 5 dB & 10 dB & Avg. & 0 dB & 5 dB & 10 dB & Avg. \\
		\midrule
		Baseline & 0.8381 & 0.9353 & 0.9681 & 0.9138 & 0.7390 & 0.8912 & 0.9536 & 0.8613 & 0.6809 & 0.8337 & 0.9227 & 0.8125 \\
		
		DANN & 0.7629 & 0.9284 & 0.9576 & 0.8830 & 0.6279 & 0.9011 & 0.9234 & 0.8175 & 0.5515 & 0.8654 & 0.8860 & 0.7676 \\
		
		VREx & 0.8196 & 0.9147 & 0.9786 & 0.9043 & 0.7411 & 0.8203 & 0.9479 & 0.8364 & 0.6907 & 0.7461 & 0.9056 & 0.7808 \\
		
		Proposed EAFM ($\alpha$=0.0) & 0.8655 & 0.9456 & 0.9781 & 0.9297 & 0.7759 & 0.9093 & 0.9160 & 0.8672 & 0.7010 & 0.8714 & 0.8552 & 0.8092 \\
		\midrule
		\textbf{Proposed EAFM} ($\alpha$=2.0) & \textbf{0.8726} & \textbf{0.9554} & \textbf{0.9885} & \textbf{0.9388} & \textbf{0.8386} & \textbf{0.9286} & \textbf{0.9768} & \textbf{0.9147} & \textbf{0.8096} & \textbf{0.9066} & \textbf{0.9551} & \textbf{0.8905} \\
		\bottomrule
	\end{tabular}
\end{table*}

With a given SNR, the training set contains 32,000 samples. Four legitimate devices generate 2,000 samples for each of four Rician $K$-factor conditions $\{-2, 2, 6, 10\}\,\mathrm{dB}$, so the total number of training samples is 32,000. This training set contains no attack samples. The validation set consists of 8,000 samples generated under the same $K$ conditions, and a $4:1$ training-to-validation split is applied so that the validation set contains no attack samples. The testing stage evaluates two scenarios under the same SNR. (i) Identification scenario: Four legitimate devices generate 2,000 samples for each of three testing $K$ conditions $\{4, -5, -10\}\,\mathrm{dB}$, so the total number of testing samples is 8,000; (ii) Spoofing attack scenario: Four legitimate devices and one malicious device (Dev5) generate 2,000 samples for each of the same three testing $K$ conditions $\{4, -5, -10\}\,\mathrm{dB}$, so the total number of testing samples is 10,000.\footnote{The dataset is publicly available at https://www.scidb.cn/s/B3QN3y} Additionally, the threshold percentile parameter in \eqref{eq:threshold_definition} is set to $p=0.95$ for scenario (ii). We conducted three types of OOD experiments as follows: 
\begin{enumerate}
	\item For the SEI scenario (i), the testing set is generated using four legitimate devices under the same SNR with $K\in\{4, -5, -10\}\,\mathrm{dB}$; note that $K=-5\,\mathrm{dB}$ and $K=-10\,\mathrm{dB}$ are OOD values not included in the training set $K\in\{-2, 2, 6, 10\}\,\mathrm{dB}$. For the proposed EAFM identifier, we compare against the following baselines. The baseline model does not incorporate either the envelope branch or the FiLM module, thereby serving as a fundamental comparison method. For further reference, DANN~\cite{Ganin2016Domain} and VREx~\cite{Krueger2021OutOfDistribution} are also implemented as OOD generalization baselines. The proposed EAFM identifier is also evaluated in two variants. The first variant uses a frozen FiLM module, with the auxiliary branch present but $\alpha$ set to 0.0. The second variant applies active FiLM modulation, where $\alpha \neq 0.0$.
	
	\item For the SEV scenario (ii) with Rician $K=4\,\mathrm{dB}$, the testing set includes four legitimate devices from the training set and one OOD device (Dev5) not encountered during training. We compare the proposed EAFM-MD verifier with the EAFM-ED verifier. Both share the same EAFM architecture but differ in their distance metric. The proposed EAFM-MD verifier employs Mahalanobis distance to explicitly model both per-device variance and inter-dimensional correlations, enabling discriminative device-level separation. Specifically, the EAFM-ED verifier is inherently limited by Euclidean distance, which treats all feature dimensions equally and cannot account for per-device covariance structure by definition~\cite{Lee2018Simple}.
	
	\item For the SEV scenario (ii) with Rician $K\in\{4, -5, -10\}\,\mathrm{dB}$, the testing set includes four legitimate devices from the training set and one OOD device (Dev5) not encountered during training. This scenario constitutes the most challenging part of our evaluation, as it simultaneously involves cross-channel OOD conditions and unknown device detection. We compare the proposed EAFM-MD verifier with the Baseline-MD verifier, which removes both the envelope auxiliary branch and the FiLM correction module while retaining the same Mahalanobis-distance-based fingerprint library architecture as the proposed EAFM-MD verifier. The fingerprint library construction and optimal $\alpha^*$ selection for the EAFM-MD verifier follow Algorithm~\ref{alg:alpha_selection}. Note that when $\alpha = 0.0$, the auxiliary branch and FiLM module are effectively disabled, making the fingerprint library of the EAFM-MD verifier coincident with that of the Baseline-MD verifier.
\end{enumerate}

Here, several evaluation indicators are considered. The Probability of False Alarm (PFA) quantifies the proportion of legitimate device samples (Dev1–Dev4) that are incorrectly classified as impostors. Identification Accuracy (Acc) refers to the proportion of received signals during identity authentication in which the predicted identity matches the actual legitimate transmitter. Overall Accuracy (Overall Acc) is the proportion of all received signals for which the predicted device category matches the true category. Finally, the Probability of Detection (PD) represents the proportion of samples correctly identified as impostors when the received signal truly originates from an impostor (Illegal).

\begin{figure}[t]
	\centering
	\includegraphics[width=0.825\linewidth]{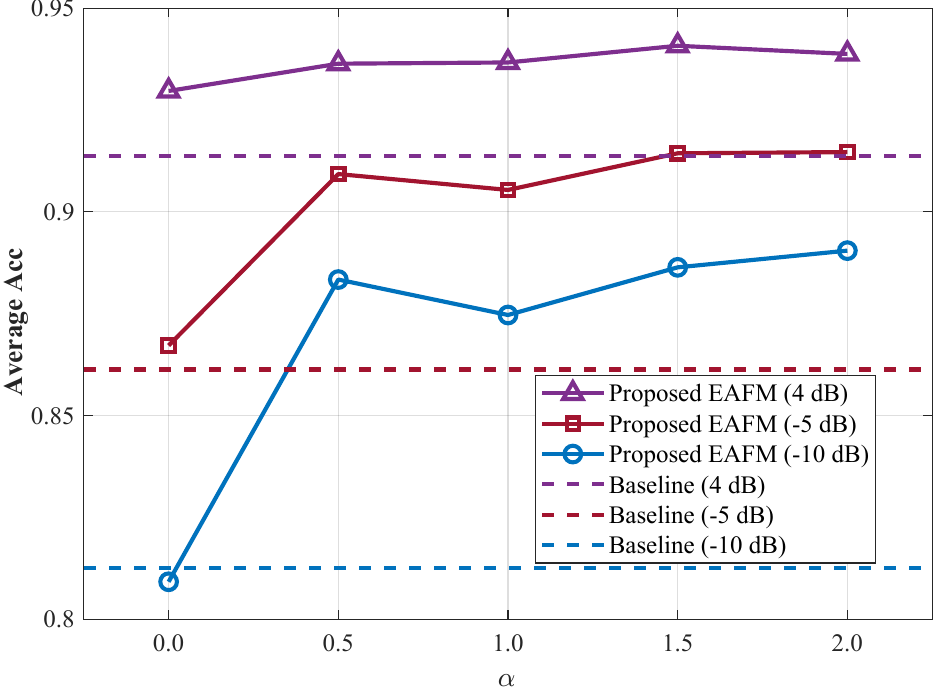}
	\caption{Average Acc of the baseline scheme and the proposed EAFM identifier under $K=4$, $-5$, and $-10\,\mathrm{dB}$.}
	\label{fig:EAFM_mean}
\end{figure}

\begin{figure}[t]
	\centering
	\includegraphics[width=0.825\linewidth]{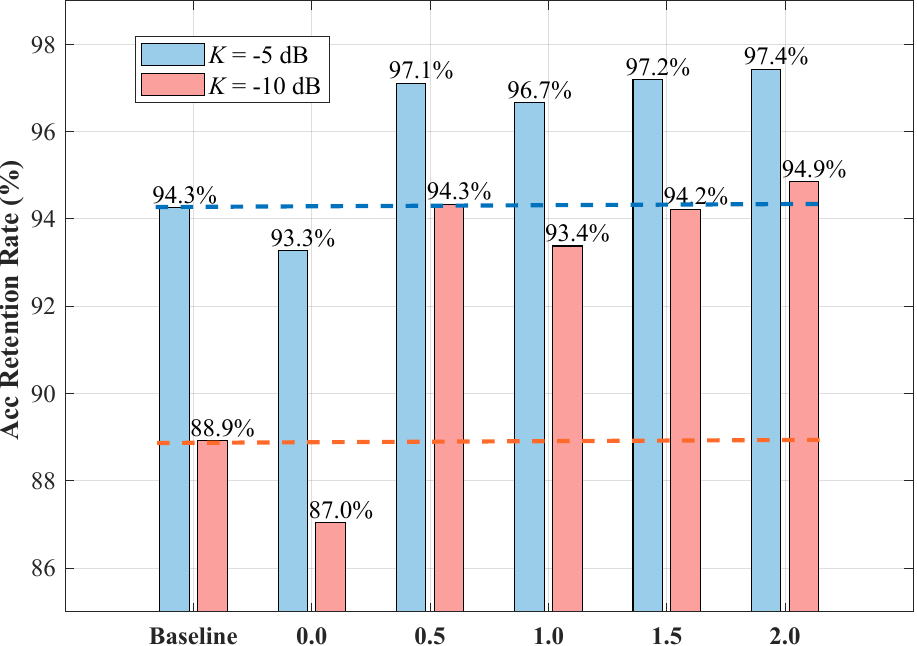}
	\caption{Retention rate of Acc for the baseline and proposed EAFM identifier with different $\alpha$ when $K=-5$ and $K=-10\,\mathrm{dB}$, relative to $K=4\,\mathrm{dB}$.}
	\label{fig:EAFM_keep}
\end{figure}

\subsection{Numerical results}
\begin{observation}
	For SEI, the proposed EAFM identifier achieves superior identification accuracy and robustness across varying Rician $K$ conditions compared to the baseline, DANN, and VREx methods. (cf. Table~\ref{tab:rician_k_factor} and Figs.~\ref{fig:EAFM_mean} and \ref{fig:EAFM_keep})
\end{observation}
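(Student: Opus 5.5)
The observation is an empirical claim, so my plan is to support it by a direct entrywise comparison of the simulation results, not by a derivation. I will treat it as two sub-claims. The first is \emph{superior accuracy}: at every tested pair of $K\in\{4,-5,-10\}\,\mathrm{dB}$ and SNR, the EAFM identifier with $\alpha=2.0$ has the largest Acc. The second is \emph{robustness}: its accuracy degrades least when moving from the in-distribution $K=4\,\mathrm{dB}$ to the OOD values $K=-5,-10\,\mathrm{dB}$, none of which appear in the training set $\{-2,2,6,10\}\,\mathrm{dB}$. Both sub-claims rest on the setup of Section~\ref{sec:sim-setup}: identical training data, no $K$ leakage, and the same SNR grid for all methods.

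For the first sub-claim, I will scan each column of Table~\ref{tab:rician_k_factor} and check that the $\alpha=2.0$ row dominates the Baseline, DANN, and VREx rows. The tightest entry is ID, $10$ dB, where VREx reaches $0.9786$ against EAFM's $0.9885$. For the averages, EAFM obtains $0.9388$, $0.9147$, and $0.8905$ for $K=4,-5,-10\,\mathrm{dB}$. The best competitor is the Baseline at $0.9138$, $0.8613$, and $0.8125$. The margin therefore grows from about $2.5$ to about $7.8$ percentage points as $K$ moves further out of distribution, which is what Fig.~\ref{fig:EAFM_mean} depicts.

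For the second sub-claim, I will compute the retention ratio $\overline{\mathrm{Acc}}(K)/\overline{\mathrm{Acc}}(4\,\mathrm{dB})$ from the averaged values. For EAFM this gives about $0.974$ at $K=-5\,\mathrm{dB}$ and $0.949$ at $K=-10\,\mathrm{dB}$. The Baseline gives about $0.943$ and $0.889$. DANN and VREx both give about $0.93$ at $K=-5\,\mathrm{dB}$ and about $0.87$ at $K=-10\,\mathrm{dB}$. EAFM therefore has the highest retention, consistent with Fig.~\ref{fig:EAFM_keep}. I will tie this to Lemma~\ref{lemma1}: as $K$ decreases, the envelope CV increases strictly, so the auxiliary branch receives a monotone and distinguishable conditioning signal even outside the training range. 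The FiLM parameters $\boldsymbol{\gamma}_i,\boldsymbol{\beta}_i$ can then extrapolate the correction. This gives a plausible mechanism for the result, not a proof of it.

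The main obstacle is scope. The frozen variant with $\alpha=0.0$ does \emph{not} dominate: at $10$ dB SNR for the OOD values of $K$ it falls below the Baseline ($0.9160$ vs.\ $0.9536$, and $0.8552$ vs.\ $0.9227$). The claim must therefore be stated for active FiLM modulation ($\alpha=2.0$). I will also use this contrast as an ablation showing that the gain comes from the learned modulation rate rather than from merely adding the envelope branch. A further caveat is that all results are single-run simulation outcomes. A fully rigorous version would need repeated seeds and confidence intervals to confirm that the smallest margins, of roughly one point, are significant.
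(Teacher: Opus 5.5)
Your proposal is correct and takes essentially the same route as the paper, which also supports the observation only by reading off Table~\ref{tab:rician_k_factor} (highest average Acc at every $K$, with the gap to the baseline widening at $K=-5$ and $-10\,\mathrm{dB}$) and the trends in Figs.~\ref{fig:EAFM_mean} and~\ref{fig:EAFM_keep}, and then explains them through envelope-guided FiLM modulation. Your column-by-column dominance check, your retention ratios for DANN and VREx, and your explicit restriction to active FiLM ($\alpha=2.0$) refine that argument rather than replace it; by contrast, the paper's Fig.~\ref{fig:EAFM_keep} compares retention only against the baseline, quoting $97.1\%$ and $94.3\%$ at $\alpha=0.5$.
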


Table~\ref{tab:rician_k_factor} compares the proposed EAFM identifier with the baseline, DANN, and VREx methods across varying Rician $K$ conditions. Specifically, the EAFM identifier achieves the highest average Acc in all scenarios. For instance, at $K=4\,\mathrm{dB}$, the EAFM identifier attains 0.9388, which outperforms the baseline (0.9138), DANN (0.8830), and VREx (0.9043). Under challenging conditions, the performance gains are even more pronounced. For example, when $K=-5\,\mathrm{dB}$, the EAFM identifier achieves 0.9147 versus the baseline's 0.8613. When $K=-10\,\mathrm{dB}$, the EAFM identifier attains 0.8905 versus the baseline's 0.825.

Fig.~\ref{fig:EAFM_mean} shows the average Acc with respect to $\alpha$. The proposed EAFM identifier consistently outperforms the baseline. At $K=4\,\mathrm{dB}$, the EAFM identifier ($\alpha=0.5$) reaches approximately 0.936 accuracy and surpasses the baseline's 0.914; at $K=-10\,\mathrm{dB}$, the EAFM identifier achieves around 0.883 versus the baseline's 0.813.

Furthermore, Fig.~\ref{fig:EAFM_keep} presents the retention rate of Acc relative to the ideal condition $K=4\,\mathrm{dB}$. The EAFM identifier generally maintains a higher retention rate, which indicates its robustness to channel degradation. At a FiLM LR of 0.5, the EAFM identifier retains 97.1\% and 94.3\% of its accuracy at $K=-5\,\mathrm{dB}$ and $K=-10\,\mathrm{dB}$, respectively. 

The results show that the envelope-guided FiLM modulation adapts the main-branch features to the auxiliary-branch envelope. This mechanism preserves device-specific fingerprints that are otherwise obscured by channel variations, thereby accounting for the improved identification accuracy and robustness of the proposed EAFM identifier.

\setcounter{table}{4}

\begin{table*}[t]
	\centering
	\caption{Attack Device Detection and Identification Performance under Various Rician $K$ Conditions}
	\label{tab:rician_k_factor_pla}
	\begin{tabular}{|c|c|ccc|ccc|ccc|}
		\toprule
		\multirow{2}{*}{\textbf{Model}} & \multirow{2}{*}{\textbf{Metric}}
		& \multicolumn{3}{c|}{\textbf{ID(4dB)}} 
		& \multicolumn{3}{c|}{\textbf{OOD(-5dB)}} 
		& \multicolumn{3}{c|}{\textbf{OOD(-10dB)}} \\
		\cmidrule(lr){3-5} \cmidrule(lr){6-8} \cmidrule(lr){9-11}
		& & 0 dB & 5 dB & 10 dB & 0 dB & 5 dB & 10 dB & 0 dB & 5 dB & 10 dB \\
		\midrule
		Baseline-MD & PD & 0.4425 & 0.9220 & 0.9395 & 0.5685 & 0.9330 & 0.9520 & 0.6340 & 0.9210 & 0.9285 \\
		& Acc & 0.8190 & 0.9001 & 0.9340 & 0.7506 & 0.8214 & 0.8760 & 0.6955 & 0.7511 & 0.8033 \\
		\midrule
		\textbf{EAFM-MD} ($\alpha$=$\alpha^{*}$) & PD & \textbf{0.8120} & \textbf{0.9790} & \textbf{0.9910} & \textbf{0.6920} & \textbf{0.9445} & \textbf{0.9775} & \textbf{0.7050} & \textbf{0.9375} & \textbf{0.9610} \\
		& Acc & \textbf{0.8194} & \textbf{0.9171} & \textbf{0.9389} & \textbf{0.7839} & \textbf{0.8865} & \textbf{0.9020} & \textbf{0.7183} & \textbf{0.8376} & \textbf{0.8495} \\
		\bottomrule
	\end{tabular}
\end{table*}

\begin{figure}[t]
	\centering
	\begin{minipage}[b]{0.241\textwidth}
		\centering
		\includegraphics[width=\textwidth]{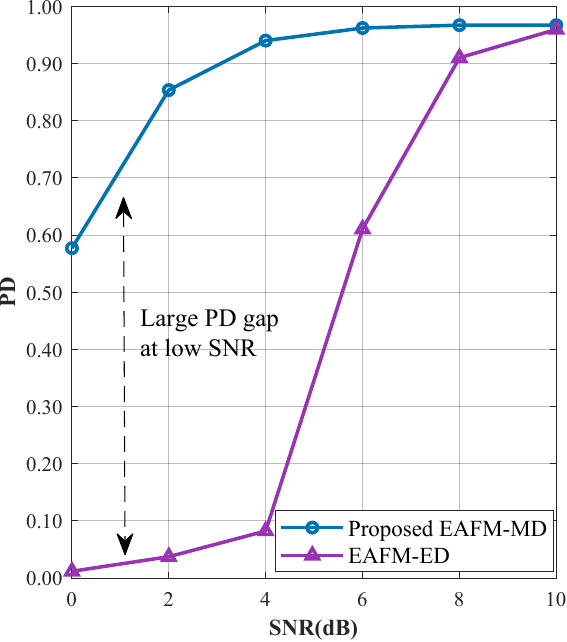}
		\par\vspace{0.5\baselineskip}
		\small (a) PD
	\end{minipage}
	\hfill
	\begin{minipage}[b]{0.241\textwidth}
		\centering
		\includegraphics[width=\textwidth]{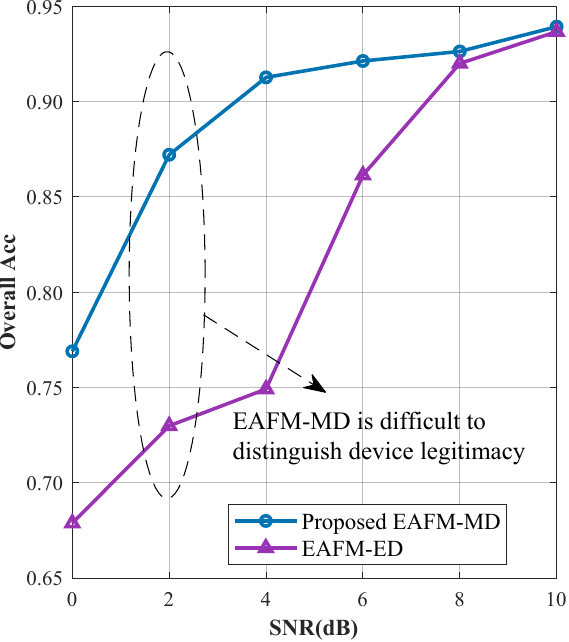}
		\par\vspace{0.5\baselineskip}
		\small (b) Overall Acc
	\end{minipage}
	\caption{PD and Overall Acc comparison for the proposed EAFM-MD verifier and the EAFM-ED verifier.}
	\label{fig:performance_metrics_comparison}
\end{figure}

\begin{figure}[t]
	\centering
	\includegraphics[width=0.485\textwidth]{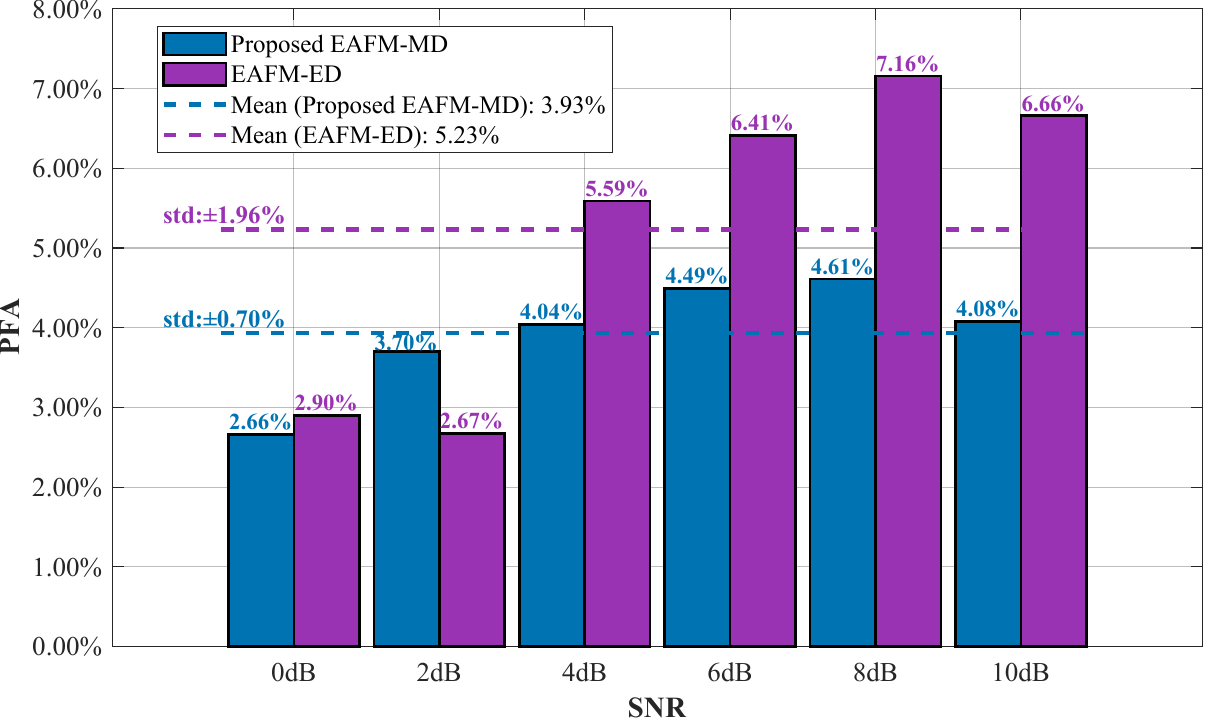}
	\caption{PFA comparison for the proposed EAFM-MD verifier and the EAFM-ED verifier. 
		Mean and standard deviation are shown for the proposed EAFM-MD verifier and the EAFM-ED verifier. 
		The proposed EAFM-MD verifier achieves the lower mean PFA and variance.}
	\label{fig:false_alarm_rate}
\end{figure}

\begin{observation}
	For SEV, the proposed EAFM-MD verifier achieves superior verification performance compared to the EAFM-ED verifier. 
	(cf. Figs.~\ref{fig:performance_metrics_comparison} and \ref{fig:false_alarm_rate})
\end{observation}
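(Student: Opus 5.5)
The statement is an empirical observation, not a theorem, so my plan combines a supporting analytical argument with a reading of the reported numerical evidence. The analytical part explains \emph{why} the Mahalanobis fingerprint library of Section~\ref{eafm_md_framework} should dominate a Euclidean one at equal thresholding policy. The empirical part then checks that Figs.~\ref{fig:performance_metrics_comparison} and \ref{fig:false_alarm_rate} bear this out on all three reported metrics (PD, Overall Acc, PFA).

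For the analytical step, I would model the FiLM-corrected features of each legitimate device $m$ as approximately Gaussian, $\bm{f}_{\text{corr}}\sim\mathcal{N}(\boldsymbol{\mu}_m,\boldsymbol{\Sigma}_m)$. This is the regime that the compactness loss \eqref{eq:compact_loss} and the sample covariance \eqref{eq:covariance_estimation} are designed for.
\begin{itemize}
\item Under this model, $d_M^2$ in \eqref{eq:mahalanobis_distance} is $\chi^2_d$-distributed. The percentile threshold \eqref{eq:threshold_definition} therefore yields a per-device false-alarm rate of about $1-p$, regardless of the scale or anisotropy of $\boldsymbol{\Sigma}_m$.
\item The acceptance region $\{\bm{f}: d_M(\bm{f};\boldsymbol{\mu}_m,\boldsymbol{\Sigma}_m^{-1})\le\tau_m\}$ is a density level set. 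By the Neyman--Pearson/level-set argument, it is the minimum-volume set carrying mass $p$.
\item The Euclidean verifier is the special case $\boldsymbol{\Sigma}_m=\bm{I}_d$. Its acceptance set is a ball of the same coverage $p$, which has strictly larger volume whenever $\boldsymbol{\Sigma}_m$ is not isotropic.
\end{itemize}
If the spoofer's (Dev5) feature density is roughly flat over the neighbourhood of the legitimate clusters, the probability that Eve is accepted scales with the acceptance volume. This gives a higher PD for EAFM-MD at matched PFA.

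I would also explain the PFA variance reported in Fig.~\ref{fig:false_alarm_rate}. The Euclidean distances of legitimate samples have a distribution that depends on the eigenvalues of $\boldsymbol{\Sigma}_m$, so a single thresholding policy yields unequal per-device false-alarm rates. The Mahalanobis distances are instead pivotal, which makes PFA nearly equal across devices and so reduces its variance.

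For the empirical step, I would read off, at each SNR and for $K=4\,\mathrm{dB}$, that the EAFM-MD curve lies above EAFM-ED in PD and Overall Acc and below it in mean PFA and spread. Overall Acc improves because fewer legitimate samples are falsely rejected before the shared logit-based identification in \eqref{eq:open_set_decision_b}. Since both verifiers share the same backbone and decision rule, the distance metric is the only difference to which the gap can be attributed.

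The main obstacle is the analytical step's reliance on assumptions the paper cannot guarantee: Gaussian class-conditional features, and an attacker density that is diffuse rather than concentrated near a legitimate centroid along a high-variance direction. In the latter case the ellipsoid could accept more of Eve than the ball would. I would first try to weaken the Gaussian assumption to elliptical symmetry, since the level-set argument survives that. I would then state the Eve condition explicitly as a hypothesis, leaving the figures to confirm that it holds for Dev5.
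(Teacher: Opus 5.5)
Your empirical step is essentially the paper's whole argument. The paper reads off Fig.~\ref{fig:performance_metrics_comparison} and Fig.~\ref{fig:false_alarm_rate}:
PD $57.70\%$ vs.\ $1.20\%$ and Overall Acc $76.90\%$ vs.\ $67.89\%$ at $0$~dB;
$96.70\%$ vs.\ $95.95\%$ and $93.93\%$ vs.\ $93.67\%$ at $10$~dB;
mean PFA $3.93\%\pm0.64\%$ vs.\ $5.23\%\pm1.96\%$. It then adds only the qualitative remark that the Mahalanobis distance weights feature dimensions by per-device covariance, while the Euclidean distance treats them equally. Your Gaussian level-set analysis formalizes that remark and is acceptable as a heuristic. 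However, the figures cannot ``confirm'' your hypothesis on Dev5 without circularity, because the PD gain is exactly what that hypothesis is meant to explain. Two of your explanatory claims are also wrong.

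\textbf{PFA variance.} The spread in Fig.~\ref{fig:false_alarm_rate} is taken across SNR levels (the paper says ``across all SNR levels'' and ``stable performance across noise variations''), not across devices. So a per-device equalization mechanism does not address the plotted quantity. Its premise also fails. In \eqref{eq:threshold_definition}, $\tau_m=(F_m)^{-1}(p)$ is the empirical $p$-quantile of device $m$'s own within-class distances. Hence on the enrollment data every device is rejected at a rate of about $1-p$ for \emph{any} distance, Euclidean included, and pivotality of $d_M^2$ plays no role in calibration. Your third bullet already uses this when it gives the Euclidean ball ``the same coverage $p$''. That contradicts your later claim that Euclidean distances yield unequal per-device false-alarm rates. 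What moves the test PFA away from $1-p$, differently at each SNR, is the mismatch between enrollment and test data (e.g., $K=4$~dB is not a training $K$-factor). Neither pivotality nor your Gaussian model controls that mismatch.

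\textbf{Overall Acc.} Your explanation is contradicted by the reported numbers. Each test condition has $8000$ legitimate and $2000$ Dev5 samples. So $10^{4}$ times the Overall Acc equals the number of legitimate samples accepted and correctly identified, plus $2000\,\mathrm{PD}$. At $0$~dB this leaves $7690-1154=6536$ correct legitimate samples for EAFM-MD versus $6789-24=6765$ for EAFM-ED. At the SNR with the largest gap, EAFM-MD therefore handles about $229$ \emph{fewer} legitimate samples correctly. More than the entire $9$-point gain comes from detecting Dev5 ($1154$ vs.\ $24$ samples). At $10$~dB the $26$-sample gain splits into $11$ legitimate and $15$ detected. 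You should therefore attribute the Overall Acc improvement chiefly to PD, with a mild legitimate-side trade-off at low SNR, rather than to fewer false rejections.
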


The performance comparison in Fig.~\ref{fig:performance_metrics_comparison} shows the PD and Overall Acc across SNR levels. The proposed EAFM-MD verifier consistently outperforms the EAFM-ED verifier in both metrics, with particularly large gaps at low SNR. At 0~dB, the proposed EAFM-MD verifier achieves a PD of $57.70\%$, compared to $1.20\%$ for the EAFM-ED verifier. Similarly, the EAFM-MD verifier's Overall Acc at 0~dB reaches $76.90\%$, compared with $67.89\%$ for the baseline. As the SNR increases, the performance gap narrows, but the EAFM-MD verifier still maintains an advantage. At 10~dB, the EAFM-MD verifier achieves a PD of $96.70\%$ and an Overall Acc of $93.93\%$, while the EAFM-ED verifier attains $95.95\%$ and $93.67\%$, respectively. These results indicate that the EAFM-MD verifier more reliably identifies unknown samples even under severely degraded channel conditions.

As illustrated in Fig.~\ref{fig:false_alarm_rate}, the PFA results with mean and standard deviation across all SNR levels show that the proposed EAFM-MD verifier achieves a mean PFA of $3.93\% \pm 0.64\%$, which is lower than the mean PFA of $5.23\% \pm 1.96\%$ for the EAFM-ED verifier. In addition, the EAFM-MD verifier exhibits substantially smaller variance than the EAFM-ED verifier, indicating more consistent and stable performance across noise variations.

The results indicate that the Mahalanobis distance leverages per-device covariance statistics to weight the feature dimensions, thereby capturing inter-feature correlations and per-device heterogeneity. In contrast, the Euclidean distance treats all feature dimensions equally. This enables the EAFM-MD verifier to distinguish legitimate devices from unknown devices more reliably than the EAFM-ED verifier.

\begin{observation}
	For SEV, the proposed EAFM-MD verifier consistently outperforms the Baseline-MD verifier in both the PD of the spoofing device and Acc across all Rician $K$ conditions. (cf. Table~\ref{tab:rician_k_factor_pla} and
	Figs.~\ref{fig:OOD_Detection_Performance} and~\ref{fig:Confusion_Matrix})
\end{observation}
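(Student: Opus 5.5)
The statement is an empirical observation rather than a theorem, so the argument establishing it is a verification against the reported data, supported by a structural reason why the comparison should come out this way. I will first make the claim precise. For every Rician condition $K\in\{4,-5,-10\}\,\mathrm{dB}$ and every SNR in $\{0,5,10\}\,\mathrm{dB}$, EAFM-MD with $\alpha=\alpha^{*}$ (selected by Algorithm~\ref{alg:alpha_selection}) must attain a PD, as defined in \eqref{eq:detection_rate}, and an Acc, as defined in \eqref{eq:identification_accuracy}, that are no smaller than those of Baseline-MD, with strict inequality wherever "outperforms" is meant. This reduces the observation to $18$ pairwise comparisons, which I will read off Table~\ref{tab:rician_k_factor_pla}. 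I will then check the figure-level summaries, Figs.~\ref{fig:OOD_Detection_Performance} and~\ref{fig:Confusion_Matrix}, for consistency with those entries.

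Before checking the numbers, I will give the structural argument for why dominance is expected. As noted in the setup, setting $\alpha=0.0$ disables the auxiliary branch and the FiLM module, so the EAFM-MD library then coincides with the Baseline-MD library. If the candidate set $\{\alpha_1,\ldots,\alpha_L\}$ contains $0.0$, then $\alpha^{*}$, which maximizes the harmonic mean in \eqref{eq:best_alpha_selection}, can do no worse than the baseline on the validation criterion. This is only a validation-set and harmonic-mean guarantee, not a per-metric test-set one. Lemma~\ref{lemma1} then explains why the advantage should persist, and even grow, at the OOD values $K=-5,-10\,\mathrm{dB}$. Because $C_v$ is strictly monotone in $K$, the envelope branch receives an injective signal of the channel condition. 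The FiLM correction \eqref{eq:film_output} can therefore move features toward a $K$-invariant region, which tightens the within-class Mahalanobis distances $\mathcal{W}_m$ and the thresholds $\tau_m$, and this raises PD without raising PFA.

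The third step is the cell-by-cell check. For PD: $0.8120>0.4425$, $0.9790>0.9220$, $0.9910>0.9395$, $0.6920>0.5685$, $0.9445>0.9330$, $0.9775>0.9520$, $0.7050>0.6340$, $0.9375>0.9210$, and $0.9610>0.9285$. For Acc, I will run the analogous nine comparisons, for example $0.8865>0.8214$ and $0.8495>0.8033$.

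The main obstacle is that nothing in the structural argument forces per-metric, per-condition dominance on the test set, since $\alpha^{*}$ trades Acc against PD on validation data. The conclusion therefore rests on the empirical entries. The weakest cell is Acc at $K=4\,\mathrm{dB}$ and $0\,\mathrm{dB}$ SNR ($0.8194$ versus $0.8190$), where the margin is within plausible run-to-run variance. To address this, I would report this cell as a tie-level improvement and, if possible, support it with repeated seeds or confidence intervals over the $2{,}000$ test samples per condition, as was done for the PFA results in Fig.~\ref{fig:false_alarm_rate}.
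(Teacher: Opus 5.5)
Your proposal is correct and is essentially the paper's own support: a read-off of Table~\ref{tab:rician_k_factor_pla}, in which all $18$ comparisons favor EAFM-MD (your thinnest cell, $0.8194$ vs.\ $0.8190$, is only about $3$ of the $8{,}000$ legitimate samples at that condition), and you are right not to let the $\alpha=0$/Lemma~\ref{lemma1} reasoning carry any weight, since the $\alpha=0.0$ EAFM identifier in Table~\ref{tab:rician_k_factor} does not coincide with the baseline (it is even slightly worse at $-10$\,dB), so ``$\alpha^{*}$ is no worse than the baseline'' is not guaranteed even on validation data. When you do the figure check, Fig.~\ref{fig:Confusion_Matrix} agrees with the table ($1857/2000=0.9285$ vs.\ $1922/2000=0.9610$), but Fig.~\ref{fig:OOD_Detection_Performance} ($85.55\%$ vs.\ $92.10\%$) will not, because its threshold is set to give exactly $95\%$ acceptance of the legitimate test samples rather than being the training-percentile $\tau_m$ of \eqref{eq:threshold_definition} (under which acceptance is only $1-1299/8000$ vs.\ $1-940/8000$), so read it as an equal-false-alarm comparison, which is exactly the ``higher PD without higher PFA'' evidence that your structural argument cannot supply.
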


\begin{figure}[t]
	\centering
	\includegraphics[width=0.244\textwidth]{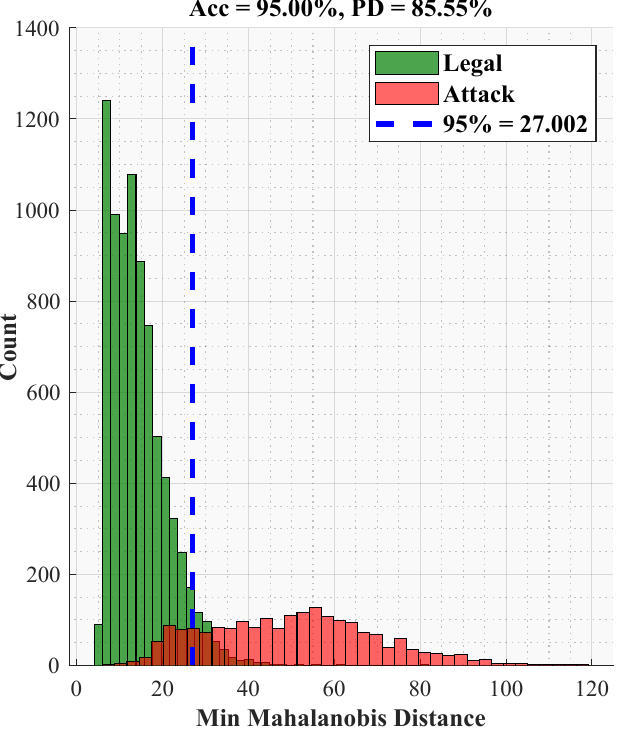}
	\hspace{-0.012\textwidth}
	\includegraphics[width=0.244\textwidth]{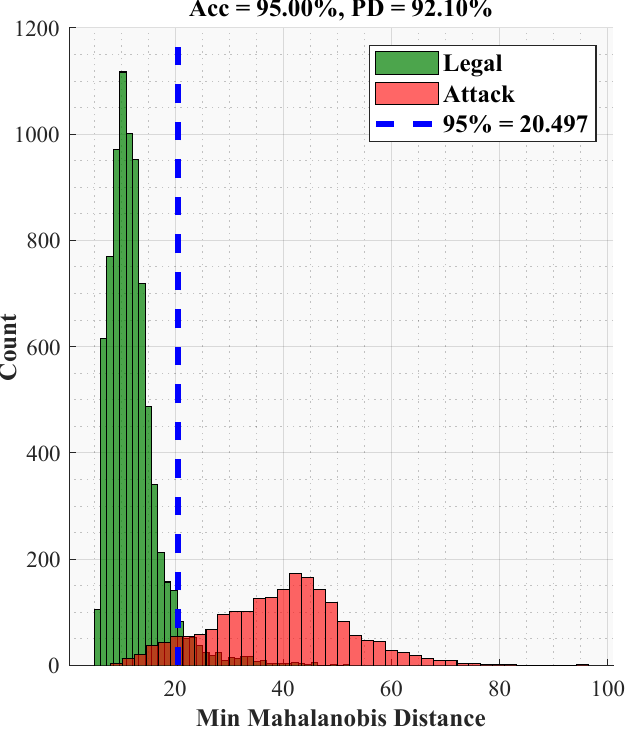}
	
	\vspace{0.5em}
	
	\makebox[0.244\textwidth]{\small (a) Baseline-MD verifier}
	\hspace{-0.012\textwidth}
	\makebox[0.244\textwidth]{\small (b) Proposed EAFM-MD verifier}
	
	\caption{OOD detection performance comparison: (a) the Baseline-MD verifier and (b) the proposed EAFM-MD verifier with $K=-10$~dB and SNR$=10$~dB. The vertical dashed line indicates the
		$p=0.95$ threshold for legal signal acceptance.}
	\label{fig:OOD_Detection_Performance}
\end{figure}

\begin{figure}[t]
	\centering
	\begin{minipage}[b]{0.425\textwidth}
		\centering
		\includegraphics[width=\textwidth]{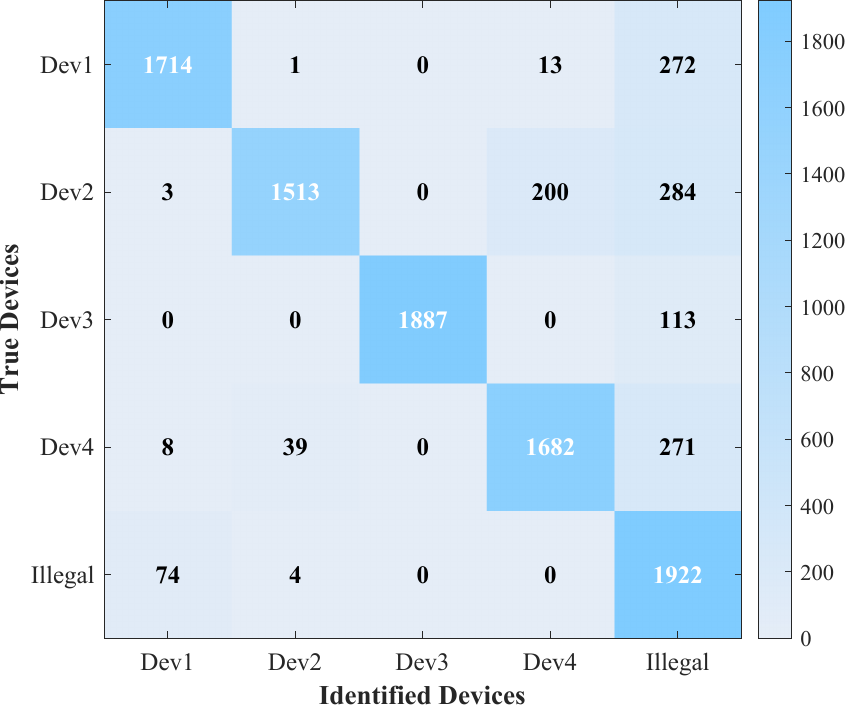}
		\par\vspace{0.5\baselineskip}
		\small (a) Proposed EAFM-MD verifier
		\par\vspace{0.5\baselineskip}
	\end{minipage}
	\hfill
	\begin{minipage}[b]{0.425\textwidth}
		\centering
		\includegraphics[width=\textwidth]{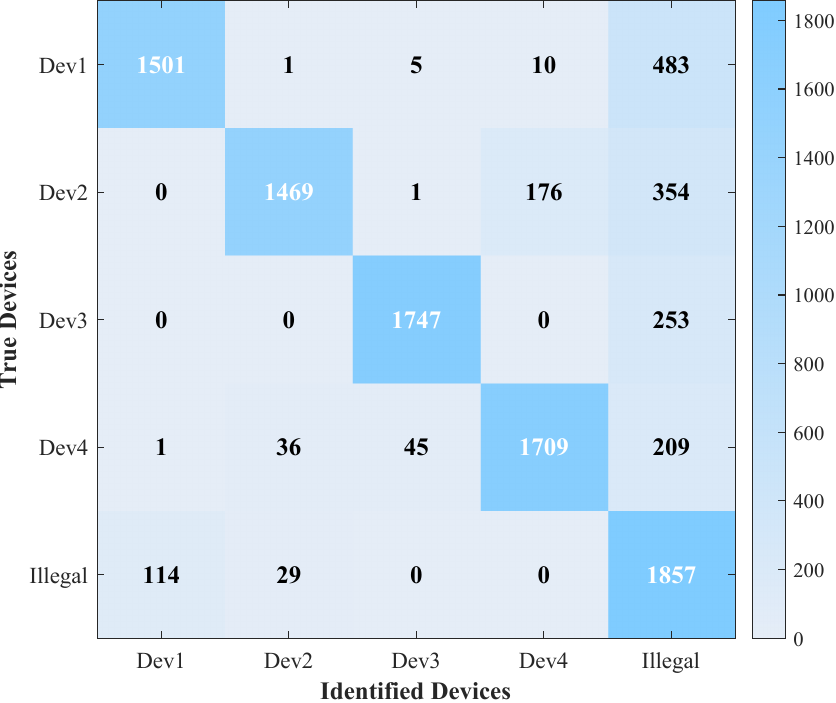}
		\par\vspace{0.5\baselineskip}
		\small (b) Baseline-MD verifier
		\par\vspace{0.5\baselineskip}
	\end{minipage}
	\caption{5-class confusion matrices of the proposed EAFM-MD verifier and the Baseline-MD verifier under the OOD condition with $K=-10$~dB and SNR$=10$~dB. }
	\label{fig:Confusion_Matrix}
\end{figure}

Table~\ref{tab:rician_k_factor_pla} presents the identification and verification performance of the EAFM-MD verifier and the Baseline-MD verifier under various Rician $K$ conditions. The proposed EAFM-MD verifier with the optimal $\alpha^*$ consistently demonstrates superior performance to the Baseline-MD verifier across all scenarios in both PD and Acc. For example, under ID(4\,dB) at 0\,dB SNR, the PD improves from 0.4425 to 0.8120 when using the EAFM-MD verifier. Under OOD($-5$\,dB) at 10\,dB SNR, the PD and Acc achieved by the EAFM-MD verifier are 0.9775 and 0.9020, respectively, while those achieved by the Baseline-MD verifier are 0.9520 and 0.8760. Even under the most challenging OOD($-10$\,dB) at 10\,dB SNR, the EAFM-MD verifier attains a PD of 0.9610 and Acc of 0.8495, surpassing the values of 0.9285 and 0.8033 obtained by the Baseline-MD verifier.

Fig.~\ref{fig:OOD_Detection_Performance} further compares the OOD detection performance under the same challenging condition with $K=-10$\,dB and SNR$=10$\,dB, where the threshold is set to achieve a 95\% legal signal acceptance rate. The histograms visualize the distribution of the minimum Mahalanobis distances for legitimate and attack samples. The proposed EAFM-MD verifier achieves a detection probability of 92.10\% for illegal attack devices, markedly outperforming the Baseline-MD verifier's 85.55\%, while maintaining the same legal acceptance rate of 95.00\%. Moreover, the 95\% threshold distance of the EAFM-MD verifier (20.497) is substantially lower than that of the Baseline-MD verifier (27.002), indicating that the EAFM-MD verifier produces more compact and separable feature representations, thereby facilitating a tighter decision boundary for legitimate devices.

Further corroboration comes from the confusion matrices in Fig.~\ref{fig:Confusion_Matrix} under the same OOD($-10$\,dB) condition at 10\,dB SNR, which visualize the detailed classification behavior. The proposed EAFM-MD verifier achieves notably fewer false alarms of legitimate devices being misclassified as unknown, reducing the total number from 1299 (Baseline-MD verifier) to 940 across all four legitimate devices. For instance, the false alarm rate for Dev1 drops from 24.15\% to 13.60\%, while that for Dev3 decreases from 12.65\% to 5.65\%. Meanwhile, the detection probability of the unknown attack device improves from 92.85\% (1857 out of 2000) to 96.10\% (1922 out of 2000), indicating that the EAFM-MD verifier more reliably isolates unknown transmitters in the feature space without increasing false alarms on legitimate devices.

The results show that the envelope-guided FiLM correction suppresses channel-induced variance in the features used for the fingerprint library, so that the Mahalanobis distance computed against per-device distributions is reduced and becomes more indicative of device identity.

\section{Conclusion}
In this paper, an EAFM identifier and an EAFM-MD verifier are proposed to address the cross-channel OOD challenge for SEI and to detect unknown spoofing devices for SEV in wireless fading channels. We first establish that the coefficient of variation of the signal envelope is strictly monotonic in the Rician-$K$ factor, which provides a unified theoretical foundation for both proposed methods. Leveraging this insight, we propose the EAFM identifier, a dual-branch neural architecture that adaptively modulates IQ-domain device features using FiLM-guided channel-conditioning information, thereby enhancing robustness against cross-channel variations. As an extension, the EAFM-MD verifier enables SEV by constructing a device fingerprint library using Mahalanobis distance metric learning and permits the detection of unknown spoofing devices without requiring adversarial samples during training. Numerical results demonstrate that the EAFM identifier achieves significantly higher accuracy retention rates across varying Rician-$K$ factors, while the EAFM-MD verifier achieves superior detection probability for unknown transmitters while maintaining high accuracy for legitimate devices. Both methods address the respective OOD challenges using only existing, legitimate signals and may thus be suitable for SEI and SEV in UAV and IIoT applications.

{\appendices \section{Proof of \textit{Lemma}~\ref{lemma1}} \label{appendix:fano_derivation}
	\begin{proof}	
		We first provide the necessary preliminaries of the Rician distribution. The Rician $K$-factor is defined as
		\begin{equation}
			K=\frac{\nu^2}{2\sigma^2},
			\label{eq:K-def}
		\end{equation}
		where $\nu$ denotes the amplitude of the LoS component and $\sigma^2$ represents the average power of the multipath components.
		
		The total average power is denoted by $\Omega$, which equals the second moment of the envelope amplitude, i.e.
		\begin{equation}
			\Omega=\mathbb{E}\!\left\{R^2\right\},
			\label{eq:Omega-def-1}
		\end{equation}
		where $R$ is the envelope amplitude. In the Rician model, $\Omega$ can also be written as
		\begin{equation}
			\Omega=\nu^2+2\sigma^2,
			\label{eq:Omega-def-2}
		\end{equation}
		which combined with \eqref{eq:K-def} yields
		\begin{equation}
			\Omega=2\sigma^2\left(1+K\right).
			\label{eq:Omega-def-3}
		\end{equation}
		
		The first and second moments of the Rician distribution are given respectively as
		\begin{align}
			\mathbb{E}\!\left\{R\right\} &= \sqrt{\frac{\pi \Omega}{4\left(1+K\right)}}\,L_{1/2}\!\left(-K\right), \label{eq:mean} \\
			\mathbb{E}\!\left\{R^2\right\} &= \Omega, \label{eq:second-moment}
		\end{align}
		where $L_{1/2}\left(x\right)$ is the generalized Laguerre function defined as
		\begin{equation}
			L_{1/2}\left(x\right)={}_1F_1\!\left(-\frac{1}{2};\,1;\,x\right),
			\label{eq:laguerre-def}
		\end{equation}
		where ${}_1F_1\left(a;\,b;\,x\right)$ is the confluent hypergeometric function of the first kind. Furthermore, we can get
		\begin{equation}
			L_{1/2}\left(-K\right)=e^{-K/2}\left[\left(1+K\right)I_0\!\left(\frac{K}{2}\right)+K\,I_1\!\left(\frac{K}{2}\right)\right],
			\label{eq:laguerre-bessel}
		\end{equation}
		where $I_0\left(\cdot\right)$ and $I_1\left(\cdot\right)$ denote the modified Bessel functions of the first kind of orders zero and one, respectively.
		
		The variance of the envelope amplitude is obtained as
		\begin{equation}
			\operatorname{Var}\!\left[R\right]=\mathbb{E}\!\left\{R^2\right\}-\left(\mathbb{E}\!\left\{R\right\}\right)^2.
			\label{eq:var-def}
		\end{equation}
		
		Substituting \eqref{eq:mean} and \eqref{eq:second-moment} into \eqref{eq:var-def}, the closed-form variance is given by
		\begin{equation}
			\operatorname{Var}\!\left[R\right]=\Omega\left[1-\frac{\pi}{4\left(1+K\right)}\left(L_{1/2}\left(-K\right)\right)^2\right].
			\label{eq:var}
		\end{equation}
		
		Therefore, the CV as a function of $K$ is obtained as
		\begin{equation}
			C_v\left(K\right)=\frac{\sqrt{\operatorname{Var}\!\left[R\right]}}{\mathbb{E}\!\left\{R\right\}}
			=\sqrt{\frac{4\left(1+K\right)}{\pi\left(L_{1/2}\left(-K\right)\right)^2}-1}.
			\label{eq:cv}
		\end{equation}
		
		We now proceed to prove the strict monotonicity. Define
		\begin{equation}
			L\left(K\right)=L_{1/2}\left(-K\right),
			\label{eq:L-def}
		\end{equation}
		and further define
		\begin{equation}
			g\left(K\right)=\frac{\left(L\left(K\right)\right)^2}{1+K}.
			\label{eq:g-def}
		\end{equation}
		
		Since the square root function is monotonically increasing on its domain, it suffices to show that $g\left(K\right)$ increases with $K$, or equivalently to prove that
		\begin{equation}
			\frac{\partial g\left(K\right)}{\partial K}>0.
			\label{eq:g-derivative-positive}
		\end{equation}
		
		Taking the derivative of $g\left(K\right)$ yields
		\begin{equation}
			\frac{\partial g\left(K\right)}{\partial K}=\frac{2L\left(K\right)\frac{\partial L\left(K\right)}{\partial K}\left(1+K\right)-\left(L\left(K\right)\right)^2}{\left(1+K\right)^2}.
			\label{eq:g-derivative-raw}
		\end{equation}
		
		Since $L\left(K\right)>0$ for $K>0$, where $K$ is understood as the linear Rician factor and therefore always positive, from \eqref{eq:g-derivative-raw} we see that $\frac{\partial g\left(K\right)}{\partial K}>0$ is equivalent to
		\begin{equation}
			2\frac{\partial L\left(K\right)}{\partial K}\left(1+K\right) > L\left(K\right).
			\label{eq:cond}
		\end{equation}
		
		Dividing both sides of \eqref{eq:cond} by $2L\left(K\right)\left(1+K\right)$ (which is positive) gives the equivalent condition, i.e.
		\begin{equation}
			\frac{1}{L\left(K\right)}\frac{\partial L\left(K\right)}{\partial K}>\frac{1}{2\left(1+K\right)}.
			\label{eq:cond-2}
		\end{equation}
		
		To utilize the Bessel function representations, we introduce an auxiliary function as
		\begin{equation}
			A\left(K\right)=e^{K/2}L\left(K\right).
			\label{eq:A-def}
		\end{equation}
		
		From \eqref{eq:laguerre-bessel}, we obtain
		\begin{equation}
			A\left(K\right)=\left(1+K\right)I_0\!\left(\frac{K}{2}\right)+K\,I_1\!\left(\frac{K}{2}\right).
			\label{eq:A-explicit}
		\end{equation}
		
		Then
		\begin{equation}
			\frac{1}{L\left(K\right)}\frac{\partial L\left(K\right)}{\partial K}=-\frac{1}{2}+\frac{1}{A\left(K\right)}\frac{\partial A\left(K\right)}{\partial K}.
			\label{eq:Lprime-over-L}
		\end{equation}
		
		Substituting \eqref{eq:Lprime-over-L} into \eqref{eq:cond-2} and simplifying, the inequality to be proven becomes
		\begin{equation}
			\frac{1}{A\left(K\right)}\frac{\partial A\left(K\right)}{\partial K}\ge \frac{K+2}{2\left(1+K\right)}.
			\label{eq:target}
		\end{equation}
		
		Computing $\frac{\partial A\left(K\right)}{\partial K}$ from \eqref{eq:A-explicit}, we obtain
		\begin{align}
			\frac{\partial A\left(K\right)}{\partial K} = \frac{\left(2+K\right)I_0\!\left(\frac{K}{2}\right) + \left(1+K\right)I_1\!\left(\frac{K}{2}\right)}{2}.
			\label{eq:Aprime_simplified}
		\end{align}
		
		Substituting \eqref{eq:Aprime_simplified} and \eqref{eq:A-explicit} into \eqref{eq:target}, and noting that all denominators are positive, we cross-multiply to obtain
		\begin{equation}
			\begin{aligned}
				&2\left(1+K\right)\left[\left(1+\frac{K}{2}\right)I_0\!\left(\frac{K}{2}\right)+\frac{1+K}{2}I_1\!\left(\frac{K}{2}\right)\right] \\
				&\quad\ge \left(K+2\right)\left[\left(1+K\right)I_0\!\left(\frac{K}{2}\right)+K\,I_1\!\left(\frac{K}{2}\right)\right].
			\end{aligned}
			\label{eq:cross-mult}
		\end{equation}
		
		Expanding and simplifying \eqref{eq:cross-mult}, the equivalent inequality is given by
		\begin{equation}
			\left[\left(1+K\right)^2-K\left(K+2\right)\right]\,I_1\!\left(\frac{K}{2}\right)=I_1\!\left(\frac{K}{2}\right)\ge 0.
			\label{eq:final-inequality}
		\end{equation}
		
		Since the modified Bessel function $I_1\left(x\right)$ is non-negative for all $x\ge 0$, we have $I_1\left(K/2\right)>0$ for $K>0$, and thus the inequality in \eqref{eq:final-inequality} holds strictly. Consequently, $\frac{\partial g\left(K\right)}{\partial K}>0$, which implies that $C_v\left(K\right)$ is strictly monotonically decreasing. This completes the proof.
		
	\end{proof}

\newpage

%
%
%
%

\vfill

\end{document}